\newcommand{\fillBox}{\hfill$\Box$}
\title{Causality and Decision-making: A Logical Framework for Systems and Security Modelling}
\author{Pinaki Chakraborty\inst{1}$^\ast$
\and
Tristan Caulfield \inst{1}$^\ast$
\and
David Pym\inst{1,2}$^\ast$
}
\authorrunning{P. Chakraborty, T. Caulfield, and D. Pym}
\institute{University College London, England, UK
\email{\{pinaki.chakraborty.22@ucl.ac.uk, 
t.caulfield, d.pym@ucl.ac.uk\}@ucl.ac.uk}
\and Institute of Philosophy, University of London\\ England, UK
\email{david.pym@sas.ac.uk} \\ 
$^\ast$ corresponding author
}
\titlerunning{Causality and Decision-making for Systems and Security}
\begin{document}
\maketitle

\begin{abstract}
Causal reasoning is essential for understanding decision-mak\-ing about the behaviour of complex `ecosystems' of systems that underpin modern society, with security --- including issues around correctness, safety, resilience, etc. --- typically providing critical examples. We present a theory of strategic reasoning about system modelling based on minimal structural assumptions and employing the methods of transition systems, supported by a modal logic of system states in the tradition of van Benthem, Hennessy, and Milner, and validated through equivalence theorems. Our framework introduces an intervention operator and a separating conjunction to capture actual causal relationships between component systems of the ecosystem, aligning naturally with Halpern and Pearl's counterfactual approach based on Structural Causal Models. We illustrate the applicability through examples of of decision-making about microservices in distributed systems. We discuss localized decision-making through a separating conjunction. This work unifies a formal, minimalistic notion of system behaviour with a Halpern--Pearl-compatible theory of counterfactual reasoning, providing a logical foundation for studying decision making about causality in complex interacting systems. 

\keywords{Logic \and Transition systems \and Decision-making \and Strategic reasoning \and System models \and Causality \and Influence \and Interface \and Separation \and Security \and Microservices} 
\end{abstract}

\section{Introduction}
\label{sec:introduction}

Causal modelling is a multidisciplinary field spanning computer science, econometrics, epidemiology, philosophy, and statistics, providing a robust framework for understanding and reasoning about cause‐and‐effect relationships in systems. Such reasoning is of particular significance in things like root-cause analysis and strategy formulation for security. 

One influential approach to understanding causality is counterfactual analysis \cite{Lewis1973}, which stipulates that an event qualifies as a cause if, counterfactually, its absence would prevent the effect from occurring.
In many counterfactual theories of causation, directed graphs have emerged as a powerful tool for representing causal relationships, as exemplified by seminal contributions from Judea Pearl \cite{judeacausality}, Hitchcock \cite{Hitchcock2001}, and Spirtes et al. \cite{Spirtes1993}.
A formal representation of these causal relationships is provided by a set of equations known as \emph{structural equations (SE)}, which explicitly encode how each variable depends on its causal predecessors. These can be visualised as directed acyclic graphs in which vertices correspond to variables, while directed edges signify direct causal dependencies.
Building on this foundation, Halpern and Pearl \cite{HalpernPearl2005} use structural equations to define a rigorous notion of actual causation, capturing conditions under which specific events can be identified as causes of given outcomes. Although actual causality --- the objective mechanism linking causes and effects --- is distinct from our knowledge of it, which is built incrementally through observation, intervention, and counterfactual reasoning, both perspectives are deeply intertwined.
This interplay motivates our system modelling approach, in which we formalize causation through precise structural and dynamic relations aligning with Halpern and Pearl's axioms of actual causation.

In particular, our work leverages Pearl's Structural Equations approach \cite{HalpernPearl2005,Pearl2012} (a detailed discussion is deferred to \ref{sec:causalModel}) and its subsequent extension by Halpern \cite{HalpernIJCAI2015}, integrating modal logic to enable rigorous causal analysis across applications, such as mitigating risks and analysing incidents in complex systems like cyber infrastructure, where interactions among software, hardware, and human actions drive outcomes.

It should be noted that we differ from the setting of do-calculus \cite{judeacausality} in the sense that stochastic interpretation of variables, which is essential to `type causality' (see Section~\ref{subsec:actualCause}), are not involved.
Also, we introduce a uniform syntactic treatment of interventions in the logical language itself unlike the setting of Pearl who formalised interventions on a semantic level.
And lastly unlike the setting of Structural Equations in which interventions only change the value of a variable, we allow interventions to alter the structural equation relation.

It is also well established that game-based strategic reasoning about systems can be modelled using the formal technology of transition systems and, consequently, can employ the methods of process algebra --- for example, see \cite{Stirling2001} for an elegant exposition of this relationship, \cite{Sulis2024} for a reflective overview, and 
\cite{Tambe2000} for a discussion of `dynamic agent organizations', noting that `agent organizations' can be described algebraically as systems of process terms --- allowing access to the expressivity required to capture decentralized/distributed and concurrent systems. These approaches are well adapted to supporting decision-making about such systems because they naturally support a rich \emph{logical} theory that is tightly integrated with the structure of processes. Here we employ this approach in the setting of a minimalistic, behaviour-based model to discuss actual causality in `ecosystems' of interacting systems (see also \cite{Tristan2022,GLP2024}), providing tools for reasoning --- that is, decision-making --- about causation and influence between system configurations. 

We illustrate our approach with systems' security examples based on the problems involving root-cause analysis --- see, for example, \cite{NEURIPS2022RCA,Microscope2018,Groot,PHZ2024} --- that are concerned with mitigating faults arising within distributed microservice architectures in large-scale software systems (cf. \cite{AWS2023}). (See also \cite{AMP2016} for a sketch of a different approach.)

We also draw inspiration from cybernetics --- particularly Simon's work \cite{SimonAdministrativeBehavior,SimonScienceArtificial} --- which emphasizes that simple, local rules and interactions can govern complex system behaviours and dynamics.
As Simon notes in \cite{SimonAdministrativeBehavior}, `All behaviour involves conscious or unconscious selection of particular actions out of all those which are physically possible to the actor and to those persons over whom he exercises influence and authority.' This observation highlights the pivotal role of `influence' in propagating effects throughout a system. By adopting the term `influence' to describe the rules governing our system's components, we align with Simon's cybernetic tradition, viewing systems as entities shaped by local interactions.

At its core, our approach treats a system as a set of vertices --- each representing a component with observable behaviours --- whose dynamics emerges solely from a set of rules called influence.
This echoes the cybernetic insight that local interactions drive broader system dynamics, and also provides a robust platform for exploring (actual) causality in interactive environments.

Section \ref{sec:introduction} outlines the scope and necessity of causal reasoning in system modelling.
Section \ref{sec:system} introduces a minimalist approach to system modelling, followed by Section \ref{sec:logic}, which develops the logical framework used to describe system models.
In Section \ref{sec:causalModel}, we demonstrate how this logic formalizes actual causation and captures causal structures. 
Section \ref{sec:examples} explores a substantial example of how we can model decision-making about the dependencies between microservices in distributed systems. Section \ref{sec:metatheory} discusses the logical metatheory of our framework, showing how bisimulation characterizes equivalence. Finally, Section \ref{sec:discussion} situates our work within the broader landscape of causal modelling and strategic decision-making.

\section{The system modelling framework}
\label{sec:system}
In this section, we adopt a deliberately minimalist view: instead of tabulating every internal state, we specify a component only by the interactions (called its behaviour) an external observer can witness and the concrete influence those interactions have on other components. This choice is not superficial in that it draws a line between state (unobservable, intensional details) and behaviour (observable, extensional facts). 

Existing literature offers various frameworks for modelling system interactions, particularly in distributed systems --- for example, \cite{Tristan2022,AP16,CP2016,Collinson2012,Shoham2008}, with extensive relevant bibliographies, are pertinent here. 
Our work builds on this foundation by drawing inspiration from a recent abstraction \cite{GLP2024} that adopts a behaviour-centric perspective, though we incorporate dynamic aspects that extend beyond their static view. 
Our terms \textit{component, influence, configuration} are inspired from the foundational work by Winskel on event structures \cite{Winskel2009} and by Simon \cite{SimonAdministrativeBehavior}.

Formally, let $ \mathcal{C} $ be the set of components and $ \mathcal{B} $ the set of all possible behaviours.
A function $\mathbb{B}: \mathcal{C} \to 2^\mathcal{B}$ assigns to each component $ c \in \mathcal{C} $ its set of allowable behaviours, $ \mathbb{B}(c) $. 
The complete state of a system is described by a configuration that specifies each component's behaviour at a particular instant.

\begin{definition}[Configuration]
\label{def:configuration}
Let $\mathcal{C}$ be a set of components, and let $\mathbb{B}: \mathcal{C} \to 2^{\mathcal{B}}$ assign to each component $c \in \mathcal{C}$ a set $\mathbb{B}(c)$ of allowable behaviours.

A \emph{configuration} over $\mathcal{C}$ is a total function $f: \mathcal{C} \to \mathcal{B}$ such that $f(c) \in \mathbb{B}(c)$ for all $c \in \mathcal{C}$.
The set of all configurations over $\mathcal{C}$ is denoted by $F_{\mathcal{C}}$. When $\mathcal{C}$ is understood from the context, we write $F$ for brevity. \fillBox
\end{definition}

To model how components in a system evolve, we introduce influence rules that specify how component behaviours are determined --- 
formally, functions that, given the current behaviour of a component and the behaviours of selected components in the system, determine its next behaviour.

\begin{definition}[Influence rules and contexts] \label{def:influence}
Let $\mathcal{C}$ be the global set of components. For each component $c\in\mathcal{C}$, let $\mathsf{Inf}(c) \subseteq \mathcal{C}\setminus\{c\}$ be the \emph{influence context} for $c$.
$\mathsf{Inf}(c)$ is the subset of components whose behaviours are relevant for determining the behaviour of $c$.
An \emph{influence rule} for $c$ is then a function $\mathcal{I}_c: \mathbb{B}(c) \times \prod_{d\in \mathsf{Inf}(c)} \mathbb{B}(d) \to \mathbb{B}(c)$,
specifying how the behaviour of $c$ evolves from its current behaviour and the behaviours of components in its influence context.
The family of all such functions relative to a set of components $\mathcal{C}$ is called $\mathcal{I}_\mathcal{C}$. \fillBox
\end{definition}

We often omit the subscript when the referenced set of components is clear from the context. 
In our framework, a system is defined by its space of possible configurations, the transition dynamics governing their evolution, and the propositions that hold in each configuration.
This view is captured by a system model, which encapsulates the set of configurations, the transitions induced by influence rules, and the mapping of configurations to the atomic propositions that hold within them.
First, we define the transition relation which is induced by a family of influence rules.
\begin{definition}[Transition relation]
\label{def:transitionrelation}
Given a component set $\mathcal{C}$, a behaviour mapping $\mathbb{B}$, and a family of influence rules $\mathcal{I} = \{\mathcal{I}_c\}_{c \in \mathcal{C}}$ where each $\mathcal{I}_c: \mathbb{B}(c) \times \prod_{d \in \mathsf{Inf}(c)} \mathbb{B}(d) \rightarrow \mathbb{B}(c)$, the \emph{transition relation} $\Delta_{\mathcal{I}} \subseteq F \times F$ is defined as $(f, f') \in \Delta_{\mathcal{I}}$ iff there exists $c \in \mathcal{C}$ such that $f'(c) = \mathcal{I}_c(f(c),(f(d))_{d \in \mathsf{Inf}(c)})$, and, $f'(d) = f(d)$ for all $d \neq c$.
That is, $f$ transitions to $f'$ when exactly one component $c$ updates its behaviour according to its local influence rule, while all other components remain unchanged. \fillBox
\end{definition}
The definition of a system model follows:
\begin{definition}[System model]
\label{def:systemmodel}
For each $c\in \mathcal{C}$, define $\Delta_{c} = \bigl\{(f,f')\in F\times F$ such that $f'(c)= \mathcal{I}_{c}\bigl(f(c),(f(d))_{d \in \mathsf{Inf}(c)}\bigr)$.
Here, $\forall d\neq c, f'(d)=f(d)$.
Let $\Delta_{\mathcal{I}} =\bigcup_{c\in \mathcal{C}}\Delta_{c}$. A \emph{system model} $\mathcal{M}$ is a tuple $(\mathcal{C}, \mathcal{B}, \mathcal{I}, F, \Delta_{\mathcal{I}}, \Gamma)$,
where $F$ is the set of all possible configurations of the system given a set of components $\mathcal{C}$, a set of possible behaviours $\mathcal{B}$, and a family of rules $\mathcal{I}$ govern the behaviour change of the components.
$\Gamma: \mathcal{P} \rightarrow 2^{F}$ is a \textit{valuation function} that assigns a subset of the configurations to each atomic proposition from the set of atomic propositions (from a set of atomic propositions $\mathcal{P}$).

For brevity, we often omit the full notation, and write $\mathcal{M}= (F, \Delta_{\mathcal{I}}, \Gamma)$.
The transition relation $\Delta_{\mathcal{I}}$ may be viewed as the edge relation of a directed graph over the configuration space $F$, where configurations are vertices and transitions form edges.
\fillBox
\end{definition}

\begin{example}
\label{ex:configuration}
Let $\mathcal{C} = \{c_1,c_2, c_3\}$, $\mathcal{B} = \{ b_{11}, b_{12}, b_{13}, b_{21}, b_{22}, b_{31}\}$, and an assignment of behaviours be $\mathbb{B}(c_1) = \{b_{11}, b_{12}, b_{13}\}$, $\mathbb{B}(c_2) = \{b_{21}, b_{22}\}$, and $\mathbb{B}(c_3) = \{b_{31}\}$.
A configuration $f_1$ in this context is $\{(c_1,b_{11}),(c_3,b_{31})\}$.
One possible choice of the influence rules is $\mathcal{I}_{c_{1}}(b_{11}) = b_{12}$, $\mathcal{I}_{c_{1}}(b_{12}) = b_{13}$, $\mathcal{I}_{c_{1}}(b_{13}) = b_{11}$, and $\mathcal{I}_{c_{2}}(b_{21}) = b_{22}$, $\mathcal{I}_{c_{2}}(b_{22}) = b_{22}$.
Another configuration $f_2$ which is `reachable' using these rules can be $\{(c_1,b_{12}),(c_3,b_{31})\}$, and so on.
\fillBox
\end{example}

To analyse subsystems within a system model, we establish conditions under which a system can be meaningfully decomposed.
This requires identifying an interface that mediate dependencies between subsystems.
In order to define subsystems we begin with a partial configuration, which is the assignment of behaviours to only a subset of the components that constitute a full configuration.

\begin{definition}[Partial configuration]
Let $\mathcal{C}' \subseteq \mathcal{C}$ be a subset of components. A partial configuration over $\mathcal{C}'$ is a function $f': \mathcal{C}' \to \bigcup_{c \in \mathcal{C}'} \mathbb{B}(c)$, where $\mathbb{B}(c)$ is the set of possible behaviours for component $c$.
While a full configuration $f \in F$ assigns behaviours to all components in  $\mathcal{C}$, a partial configuration $f'$ assigns behaviours only to a chosen subset $\mathcal{C}'$, leaving the rest undefined.
Given a full configuration $f \in F$, its \emph{restriction} to $\mathcal{C}'$ is denoted by $f\restriction_{\mathcal{C}'}$. \fillBox
\end{definition}

The following defines an interface among the components by imposing constraints on the influence relationships among components:

\begin{definition}[Interface-admitting system model]
\label{def:interfaceadmitting}
A system model $ \mathcal{M} = (F, \Delta_{\mathcal{I}}, \Gamma) $
over a global component set $ \mathcal{C} $ is said to \emph{admit an \textbf{interface}} if there exist subsets $ C_1, C_2 \subseteq \mathcal{C} $ satisfying $C_1 \cup C_2 = \mathcal{C}$ such that for every component $ c \in C_i \setminus (C_1 \cap C_2) $ (with $ i \in \{1,2\} $), the influence context $\mathsf{Inf}(c) \subseteq C_i$, and for every component $ c \in (C_1 \cap C_2)$, the influence context $\mathsf{Inf}(c) \subseteq (C_1 \cap C_2)$.
We say that $ \mathcal{M} $ is \emph{interface-admitting} if such subsets $ C_1 $ and $ C_2 $ exist with interface $(C_1 \cap C_2)$.
The two conditions above ensure that non-interface components depend only on other components within their own partition (including the interface), and that interface components depend only on components in the interface. \fillBox
\end{definition}

\begin{remark}
For each $c \in C_i$, its local influence rule is $\mathcal{I}^i_c: \mathbb{B}(c) \times \prod_{d \in \mathsf{Inf}(c)} \mathbb{B}(d) \to \mathbb{B}(c)$.
These rules are consistent with the constraints of influence locality specified above.
The original influence rule $\mathcal{I}_c$ is recoverable from the local rules.
Specifically, for any $b \in \mathbb{B}(c)$ and any behaviour assignment $(b_d)_{d \in \mathsf{Inf}(c)}$, it holds that $\mathcal{I}_c\left(b, (b_d)_{d \in \mathsf{Inf}(c)}\right) = \mathcal{I}^i_c\left(b, (b_d)_{d \in \mathsf{Inf}(c)}\right)$. \fillBox
\end{remark}

\begin{example}[Interface]
\label{ex:interface-c2}
Let $\mathcal{C}=\{c_1,c_2,c_3\}$ and $
\mathbb{B}(c_1)=\{b_{11},b_{12},b_{13}\}$,
$\mathbb{B}(c_2)=\{b_{21},b_{22}\}$, and, $\mathbb{B}(c_3)=\{b_{31}\}$.
The influence contexts are $E(c_1)=\varnothing$,$E(c_2)=\{c_1\}$,$E(c_3)=\varnothing$. 
The influence rules are, $\mathcal{I}_{c_1}(b_{11})=b_{12}$, $\mathcal{I}_{c_1}(b_{12})=b_{13}$, $\mathcal{I}_{c_1}(b_{13})=b_{11}$, $\mathcal{I}_{c_2}(b_{21},b_{12})=b_{22}$, $\mathcal{I}_{c_2}(b_{21},\_ )=b_{21}$, $\mathcal{I}_{c_2}(b_{22},\_)=b_{22}$,$\mathcal{I}_{c_3}(b_{31})=b_{31}$. 
Thus $c_1$ cycles its behaviours autonomously, $c_2$ switches from
$b_{21}$ to the $b_{22}$ \emph{only when} $c_1$ behaves $b_{12}$, and $c_3$ is inert.
Take the partition $C_1=\{c_1,c_2\}$, $C_2=\{c_2,c_3\}$, and, $I=C_1\cap C_2=\{c_2\}$. The locality conditions of Definition~\ref{def:interfaceadmitting}
hold, and thus $\{c_2\}$ constitutes an interface. \fillBox
\end{example}

A conjugate decomposition ensures that an interface-admitting system can be partitioned into subsystems in a way that preserves the global system behaviour through consistent interactions at the interface, with local transition dynamics faithfully reflecting the overall system evolution.
\begin{definition}[Conjugate decomposition]
Let $\mathcal{M} = (F, \Delta_{\mathcal{I}}, \Gamma)$ be an inter\-face-admitting system model over a global component set $\mathcal{C}$.
In particular, let $C_1 \cap C_2$ ($C_1,C_2 \subseteq \mathcal{C}$) form an interface in $\mathcal{M}$.
A \emph{conjugate decomposition} of $\mathcal{M}$ with respect to the interface $I = C_1 \cap C_2$ is a pair of partial system models 
$(F_1, \Delta_{\mathcal{I}_1}, \Gamma_1)$ over $C_1$, and
$(F_2, \Delta_{\mathcal{I}_2}, \Gamma_2)$ over $C_2$, such that,
the following conditions are satisfied:
\begin{enumerate}
    \item $F_1$ and $F_2$ are the sets of partial configurations over $C_1$ and $C_2$, respectively, with $\Gamma_1$ and $\Gamma_2$ being the restrictions of the global valuation $\Gamma$ to $F_1$ and $F_2$.
    \item The global transition relation $\Delta_{\mathcal{I}}$ is recoverable from the partial transition relations $\Delta_{\mathcal{I}_1}$ and $\Delta_{\mathcal{I}_2}$; that is, for any full configuration $f\in F$ with restrictions $f\restriction_{C_1} = f_1$ and $f\restriction_{C_2} = f_2$, if $f \Delta_{\mathcal{I}} f'$, the corresponding restrictions satisfy $f_1 \Delta_{\mathcal{I}_1} f'_1$ and $f_2 \Delta_{\mathcal{I}_2} f'_2$, and on the interface, $I$, the partial configurations agree.
\end{enumerate} \vspace{-6mm}
\fillBox
\end{definition}

A system can intervened on by triggering changes in how its component's behaviours are altered.
This is formalized via interventions which are one-time modifications applied to a specific subset of components replacing their existing influence rules with new ones.
After the intervention, the system continues to operate under the new rules.

\begin{definition}[Intervention]
\label{def:intervention}
Consider a system model $\mathcal{M} = (F, \Delta_{\mathcal{I}}, \Gamma)$.
An \emph{intervention} $\theta_{C'}$ consists of a pair $(C', \mathcal{I}'_{C'})$, where $C' \subseteq \mathcal{C}$ is the subset of components targeted by the intervention, and $\mathcal{I}'_{C'} = \{\mathcal{I}'_c\}_{c \in C'}$ is a new set of influence rules for the components in $C'$.
Each rule $\mathcal{I}'_c : \mathbb{B}(c) \times \prod_{d \in \mathsf{Inf}(c)} \mathbb{B}(d) \to \mathbb{B}(c)$ respects the original influence context $\mathsf{Inf}(c)$.

An intervention is \emph{atomic} and \emph{one-time}: it modifies the influence rules instantaneously and irreversibly at the point of application, after which the system evolves using %according to 
the new rules. When the intervention $\theta$ is applied, the system model transforms into $\mathcal{M}_\theta = (F, \Delta_{\mathcal{I}}^{\theta}, \Gamma)$, where the modified influence rules are given by, $\mathcal{I}^{\theta}_{c} = \mathcal{I}'_c$ if $c \in C'$; otherwise it does not change. 
The transition relation $\Delta_{\mathcal{I}}^{\theta}$ is the smallest relation closed under these revised rules, while $F$ and $\Gamma$ remain unchanged. \fillBox
\end{definition}
Note that, unlike in Structural Causal Models, where interventions fix values to some variables of interest, our framework allows interventions to replace influence rules outright, which corresponds to modifying the underlying structural equations.

\begin{example}[Intervention]
\label{ex:reset-intervention}
Let the component set be $\mathcal{C}= \{c_1,c_2,c_3\}$
and the behaviour domains be $\mathbb{B}(c_1)=\{b_{11},b_{12},b_{13}\}$, $\mathbb{B}(c_2)=\{b_{21},b_{22}\}$,and $\mathbb{B}(c_3)=\{b_{31}\}$.
The influence contexts be, $\mathsf{Inf}(c_1)=\varnothing$, $\mathsf{Inf}(c_2)=\{c_1\}$, and, $\mathsf{Inf}(c_3)=\varnothing$.
The influence rules before intervention are, $\mathcal{I}_{c_1}(b_{11})=b_{12}$,
$\mathcal{I}_{c_1}(b_{12})=b_{13}$, $\mathcal{I}_{c_1}(b_{13})=b_{11}$, $\mathcal{I}_{c_2}(b_{21},b_{12})=b_{22}$, $\mathcal{I}_{c_2}(b_{21},\_)=b_{21}$,
$\mathcal{I}_{c_2}(b_{22},\_)=b_{22}$,
$\mathcal{I}_{c_3}(b_{31})=b_{31}$.
Thus $c_1$ cycles through three states independently, $c_2$ switches from $b_{21}$ to $b_{22}$ \emph{only if} $c_1$ currently shows $b_{12}$, and $c_3$ is inert.
Apply the atomic intervention $\theta=(\{c_1\},\mathcal{I}'_{c_1})$
with $\mathcal{I}'_{c_1}(b_{11})=\mathcal{I}'_{c_1}(b_{12})=\mathcal{I}'_{c_1}(b_{13}) = b_{11}$,
leaving all other rules unchanged.
After $\theta$ every reachable configuration $f$ of the intervened model satisfies $f(c_1)=b_{11}$.
Because $c_2$ can behave $b_{22}$ only when $b_{12}$ is in its influence context,
the reset freezes $c_2$'s behaviour as $b_{21}$. \fillBox
\end{example}

\begin{remark}
If the original system $\mathcal{M}$ is decomposable via an interface-dependent decomposition, then the intervened model $\mathcal{M}_\theta$ remains decomposable under the same decomposition structure, as interventions do not alter the influence contexts, which govern the decomposition.
\fillBox
\end{remark}

In the sequel, a pointed system model is a pair consisting of a system model $\mathcal{M}$ and a \emph{chosen} configuration $f$ in the model. 

\begin{remark}
\label{remark:interveneRelation} 
Consider an intervention $\theta = (C', I'_{C'})$, where $C' \subseteq \mathcal{C}$ is the subset of components targeted by the intervention and $I'_{C'} = \{ I'_c \mid c \in C'\}$ is a set of new influence rules.
For two pointed system models $(\mathcal{M}_1,f) = (F, \Delta_{\mathcal{I}_1}, \Gamma, f)$ and  $(\mathcal{M}_2,f) = (F, \Delta_{\mathcal{I}_2}, \Gamma,f)$, we say that $\mathcal{M}_1 R_\theta \mathcal{M}_2$ holds if, for every component $c\in \mathcal{C}$,
$\mathcal{I}_2(c)= I'_c$ if $c\in C'$, and $\mathcal{I}_2(c)= \mathcal{I}_1(c)$ if  $c\notin C'$, so that the intervention changes only the influence rules for components in $C'$.
The transition relation $\Delta_{\mathcal{I}_2}$ is then induced by these updated influence rules.
We denote the union of all such relations with $R_\Theta$. \fillBox
\end{remark}

Our terms \textit{component, influence, configuration} are inspired from the foundational work by Winskel on event structures \cite{Winskel2009} and by Simon \cite{SimonAdministrativeBehavior}.
We repurpose these notions to capture dynamic causal interactions within the unified framework developed in this paper.

\section{A logic for minimal system models}
\label{sec:logic}
In this section, we introduce a logical language, denoted by $\mathcal{L}(\langle \theta \rangle, \ast)$, tailored to capture the dynamic and structural aspects of minimal system models.
Our language integrates standard modal operators $\Box$ and $\Diamond$ (with $\Diamond$ as the dual of $\Box$), a dynamic operator $\langle \theta \rangle$ that reflects interventions on a set of components, and a structural separation operator, $\ast$ --- similar in spirit to the multiplicative connective as in, for example, \cite{PymO'Hearn1999,IO2001,GLMP2025}, itself in the long tradition of relevance logic (e.g., in a vast literature, \cite{Read1988}) --- which enables the decomposition of system configurations.

\subsection{Syntax and semantics}
\label{subsec:syntax}
The language $\mathcal{L}(\langle\theta\rangle, \ast)$ is given by 
$\varphi \Coloneqq p \mid \neg \varphi \mid \varphi \land \varphi \mid \Box \varphi \mid \Diamond \varphi \mid \langle \theta \rangle \varphi \mid \varphi \ast \varphi$, 
where $p$ ranges over atomic propositions. Implication, $\rightarrow$, and disjunction, $\lor$, are defined in the usual (classical) way. 
We denote the subset consisting of $\ast$-free formulae by $\mathcal{L}(\langle\theta\rangle)$.

The semantics is defined relative to a system model $\mathcal{M} = (F, \Delta_{\mathcal{I}}, \Gamma)$ over a set of components $\mathcal{C}$.
Here, $F$ is the set of full configurations, each assigning a behaviour to every component in $\mathcal{C}$, $\Delta_{\mathcal{I}}$ is a transition relation based on influence rules 
$\mathcal{I}$, $\Gamma$ is a valuation assigning propositions to configurations.

\begin{definition}[Semantics]
\label{def:satisfaction}
Given a system model $\mathcal{M} = (F, \Delta_{\mathcal{I}}, \Gamma)$ and a configuration $f \in F$, the satisfaction relation $\models$ is defined as follows:
\[{\small 
\begin{array}{rcl}
    \mbox{$(\mathcal{M}, f) \models p$} & \;\mbox{iff}\; &  \mbox{$f \in \Gamma(p)$} \\ 
    
    \mbox{$(\mathcal{M}, f) \models \neg \varphi$} & \;\mbox{iff}\; & \mbox{$(\mathcal{M}, f) \not\models \varphi$} \\
	
    \mbox{$(\mathcal{M}, f) \models \varphi \land \psi$} & \;\mbox{iff}\; & \mbox{$(\mathcal{M}, f) \models \varphi$ and $(\mathcal{M}, f) \models \psi$} \\ 
	
    \mbox{$(\mathcal{M}, f) \models \Box \varphi$} & \;\mbox{iff}\; &  \mbox{for every $f' \in F$ with $f \Delta_{\mathcal{I}} f'$, it holds that $(\mathcal{M}, f') \models \varphi$} \\ 
	
    \mbox{$(\mathcal{M}, f) \models \Diamond \varphi$} & \;\mbox{iff}\; &  \mbox{there exists some $f' \in F$ with $f \Delta_{\mathcal{I}} f'$ such that $(\mathcal{M}, f') \models \varphi$} \\
    
    \mbox{$(\mathcal{M}, f) \models \langle \theta \rangle \varphi$} & \;\mbox{iff}\; &  \mbox{there exists some intervention $\theta_{C'}$ (with $C' \subseteq \mathcal{C}$) and a} \\ 
    & & \mbox{configuration $f'$ satisfying $f \Delta_{\mathcal{I}}^{\theta} f'$ such that $(\mathcal{M}_{\theta_{\mathcal{C}'}}, f') \models \varphi$,} \\ & & \mbox{where $\mathcal{M}_{\theta_{\mathcal{C}'}} = (F, \Delta_{\mathcal{I}}^{\theta_{\mathcal{C}'}}, \Gamma)$ is the updated model} \\
    
    \mbox{$(\mathcal{M}, f) \models \varphi \ast \psi$} & \;\mbox{iff}\; & \mbox{there exist $\mathcal{C}_1, \mathcal{C}_2 \subseteq \mathcal{C}$ such that 
    $\mathcal{C}_1 \cap \mathcal{C}_2$ constitutes}  \\ & & \mbox{an interface, and both $(\mathcal{M}_{\mathcal{C}_1}, f|_{\mathcal{C}_1}) \models \varphi$ and}   \\  & & \mbox{$(\mathcal{M}_{\mathcal{C}_2}, f|_{\mathcal{C}_2}) \models \psi$, where $\mathcal{M}_{\mathcal{C}_1}$ is the partial model over $\mathcal{C}_1$,} \\
    & & \mbox{and $\mathcal{M}_{\mathcal{C}_2}$ is the partial model over $\mathcal{C}_2$}
    \end{array}} 
   \] 
A model $\mathcal{M}$ satisfies a formula $\varphi$ at a configuration $f$ iff  $(\mathcal{M}, f) \models \varphi$. \fillBox
\end{definition}
A formula $\Box \varphi$ is read as `necessarily $\varphi$', meaning that in every configuration accessible from the current configuration via $\Delta_{\mathcal{I}}$, the formula $\varphi$ holds.
A formula $\Diamond \varphi$ is read as `possibly $\varphi$', meaning that there exists a configuration accessible from the current configuration via $\Delta_{\mathcal{I}}$ in which the formula $\varphi$ holds.
The separating conjunction $\ast$ is introduced to partition the system into overlapping subsystems, via a shared interface, enabling modular reasoning about distinct parts of the system.
A formula $\varphi \ast \psi$ is read as `$\varphi$ separating-conjoined with $\psi$', meaning that the system can be partitioned into two overlapping subsystems --- with their shared interface mediating external influences --- such that one subsystem satisfies $\varphi$ and the other satisfies $\psi$.
The intervention operator $\langle\theta\rangle$ allows us to formally represent and evaluate counterfactual modifications.
A formula $\langle\theta\rangle\varphi$ is read as `there exists an intervention $\theta$ such that after its application, the formula $\varphi$ holds in the resultant model'.

\begin{remark}
Each configuration $f$ is associated with a \textit{characteristic formula} $\chi_f = \bigwedge_{c \in \mathcal{C}} p_{c, f(c)}$, such that $(\mathcal{M}, f') \models \chi_f$ if and only if $f' = f$. This formula uniquely identifies $f$. \fillBox
\end{remark}

\section{Causal models}
\label{sec:causalModel}

Although many counterfactual frameworks for causal modelling exist --- ranging from probabilistic graphical models with soft interventions \cite{Koller2003} to process-based and mechanistic accounts --- Pearl and Halpern’s structural-equation approach offers two decisive advantages for our purposes. First, it pairs a clear graphical intuition with algebraic structural equations, allowing interventions to be represented by the simple replacement of functions; second, its formal counterfactual semantics maps cleanly to systems with rich internal dynamics. These features give us a manipulable, diagnostics-friendly framework that integrates naturally with our component--influence--configuration ontology, enabling fine-grained analysis of causation in complex, distributed systems.

\subsection{The Halpern--Pearl Framework}
\label{subsec:actualCause}
Pearl's account \cite{Pearl2012} formalizes a causal model as a tuple $M = \langle U, V, F \rangle$, where $U$ is a set of exogenous variables capturing external influences, $V$ is a set of endogenous variables representing the internal state, and $F$ is a family of functions (structural equations) of the form $v_i = f_i(pa_i, u_i)$ for $i=1,\ldots,n$, with $pa_i \subseteq V\setminus\{V_i\}$ being the minimal set of parent variables that determine $V_i$, and $u_i\subseteq U$ the corresponding exogenous inputs.
For any fixed assignment $U \coloneqq u$, these equations yield a unique solution that defines a distinct causal scenario.
Structural equations encode causal relationship by setting the left-hand variable as the effect and right-hand variables as causes, with equality signalling a directional `determined by' relationship.

Building on Pearl's approach, Halpern extends this foundation \cite{HalpernActualCausality} by focusing on an event-centric perspective, distinguishing between type causality (general patterns) and token causality (specific instances).
While our system modelling framework naturally aligns with Halpern's analytical framework on actual causality. 
Since we do not model causality using random variables, we focus on actual causality rather than type causality among configurations.

\emph{Actual causation} concerns retrospective causal claims --- asserting that an event $C$ was a cause of an effect $E$.
Halpern's extended framework distinguishes between endogenous and exogenous variables, where a causal model $ M = (S,F) $ consists of a signature $ S $ specifying variables and their possible values, and a set of structural equations $ F $ governing their interactions \cite{HalpernPearl2005}.
A causal setting is a pair $ (M, \vec{u}) $, where $ \vec{u} $ assigns values to exogenous variables, determining the behaviour of endogenous ones.
In this framework, an event $ A $ (encoded by some formula $\varphi$) is an actual cause of $ E $ (encoded by another formula $\psi$ if (i) both $A$ and $E$ occur in the actual world, (ii) in a counterfactual world where $A$ is absent but \textit{all else} remains fixed, $E$ does not occur, and (iii) $A$ is minimal, meaning no proper subset of $A$ suffices to bring about $E$.
The Halpern--Pearl (HP) definition of actual causation \cite{HalpernActualCausality} formalizes these three criteria of actual causal relationships via three clauses --- AC1, AC2, and AC3 (see the appendix for details).
In a similar manner, we introduce our notion of \emph{cause} within the context of system models, aligning our approach with the HP criteria while adapting it to the dynamics of configuration-based systems. We use a variant of {AC2($a^m$)} clause introduced in \cite{HalpernIJCAI2015}.

\begin{definition}[Cause]
\label{def:cause}
Let $\mathcal{M} = (F, \Delta_{\mathcal{I}}, \Gamma)$ be a system model, and $f_1, f_2 \in F$ be configurations over components $\mathcal{C}$.
Let $\psi_E = \bigwedge_{c \in \mathcal{C}_E} p_{c=f_2(c)}$ be the effect formula, where $\mathcal{C}_E \subseteq \mathcal{C}$ is the set of components relevant for determining the outcome.
A subset of components $ \mathcal{C}' \subseteq \mathcal{C}$, whose behaviours are fixed as in $f_1$ (i.e., $\chi_C = \bigwedge_{c\in \mathcal{C}'} p_{c=f_1(c)}$), is called a \textbf{cause} of $f_2$ from $f_1$ (denoted $Cause(f_1,f_2)$) if the following conditions hold:
    \begin{enumerate}\setlength\itemsep{-0.75mm}
    \item There exists a sequence of transitions such that $ f_1 \Delta^+_{\mathcal{I}} f_2$,
    where $\Delta^+_{\mathcal{I}}$ is the transitive closure of $\Delta_{\mathcal{I}}$, and $ f_1(c)=f_2(c) \quad \text{for all } c\in \mathcal{C}'$.
    This is expressed as $ \Diamond^+ (\psi_E \land \chi_C)$, and is an actuality condition (analogous to \textbf{AC1} in HP definitions \cite{HalpernActualCausality}).

    \item There exists a witness set $\mathcal{W} \subseteq \mathcal{C}$ such that for any configuration $f_1'$ where $f_1'(c) = f_1(c)$ for all $c \in \mathcal{W}$, but $f_1'(c) \neq f_1(c)$ for some $c \in \mathcal{C}' \setminus \mathcal{W}$, if $f_1' \Delta_{\mathcal{I}} f_2'$, then $f_2' \neq f_2$. This is the \textit{counterfactuality condition} analogous to \textbf{AC2.} in HP definitions \cite{HalpernActualCausality}).
        
    Let $\chi_C = \bigwedge_{c\in C} p_{c=f_1(c)}$, $\chi_{\mathcal{W}} = \bigwedge_{c\in \mathcal{W}} p_{c=f_1(c)}$.
    Also let 
    \[{\small 
    \chi_{C\setminus\mathcal{W}}' = \bigvee_{c\in C\setminus\mathcal{W}} \neg p_{c=f_1(c)}}
    \]
    be the formula expressing that at least one component in $C\setminus\mathcal{W}$ has changed relative to $f_1$ after an intervention.
    The counterfactual condition can be expressed as $ \langle \theta \rangle (\chi_{\mathcal{W}} \ast \chi_{C\setminus\mathcal{W}}') \to \Box^+ \neg ( \psi_E \land \chi_C )$

    \item There is no proper subset $\mathcal{C}'' \subset \mathcal{C}'$ that satisfies both the above conditions. This is the \emph{Minimality} condition analogous to \textbf{AC3} in HP  \cite{HalpernActualCausality}).
    \end{enumerate}
    \vspace{-8mm}
    \fillBox
\end{definition}

The invariance of the candidate cause's behaviours across a transition from configuration  $f_1$ to $f_2$, expressed as  $f_1(c) = f_2(c)$ for all $c \in \mathcal{C}'$, aligns with Halpern and Pearl's AC1 condition, which requires that both the candidate cause and the effect hold in the actual world.
It confirms the candidate cause's presence in the actual system evolution, enabling counterfactual analysis: by altering the candidate cause in a hypothetical scenario and observing the effect's absence, we isolate its causal role.
The second condition helps isolate the subset of components which constitute a cause.
The third conditions ensures no proper subset of the candidate cause suffices as an actual cause --- by enforcing that every component in  $\mathcal{C}'$ is essential; restricting to any proper subset $\mathcal{C}'' \subset \mathcal{C}'$ disrupts either the invariance in the actual transition or the counterfactual dependence, thus preventing over-attribution. 

Understanding how changes propagate through a system is essential for analysing causality.
A \textit{causal chain} captures this progression by linking configurations through causal dependencies (refer to Definition~\ref{def:cause}), ensuring that each transition satisfies the established criteria of causal relationships.

\begin{definition}[Causal chain]
\label{def:causalchain}
A \textit{causal chain} in a given system model $ \mathcal{M}$  $=$ $(F, \Delta_{\mathcal{I}}, \Gamma)$ is a finite sequence of configurations $(f_1, f_2, \ldots, f_n)$ with $n \geq 2$ and each $f_i \in F$, satisfying the following conditions:
\begin{enumerate} \setlength\itemsep{-0.75mm}
\item For each consecutive pair $(f_i, f_{i+1})$, there exists a subset of components $\mathcal{C}_i \subseteq \mathcal{C}$ such that $ \mathcal{C}_i$ is a cause of $ f_{i+1} $ from $ f_i $ according to the three criteria (actuality, counterfactuality, minimality). 
   
\item For each $ i$, it holds true that $(f_i, f_{i+1}) \in \Delta_{\mathcal{I}}^+ $, meaning that the causal influence is realizable through one or more transitions in the system.

\item The chain is minimal in the sense that no configuration $ f_k $ can be removed without violating sequential causality.
This ensures that the chain does not include superfluous steps.
\end{enumerate}
We denote the set of all causal chains in $ \mathcal{M} $ by $Chain(\mathcal{M})$. \fillBox
\end{definition}

\begin{definition}[Causal system model]
\label{def:CausalModel}
A causal projection of a system model $\mathcal{M} = (F,\Delta_{\mathcal{I}},\Gamma)$ is a tuple $(F^c, \Delta^c, \Gamma^c) $ such that $ F^c \subseteq F$ consists of configurations that appear in at least one causal chain in $Chain(\mathcal{M}) $, and, $\Delta^c$ and $\Gamma^c$ are the restrictions of $\Delta_{\mathcal{I}}$ and $\Gamma$ respectively to $F^c$.
The system model $\mathcal{M} = (F, \Delta_{\mathcal{I}}, \Gamma)$ is called a \emph{causal system model} if it has a causal projection.
\fillBox
\end{definition}
If the graph $(F^c, \Delta^c)$ is acyclic, i.e., $\Delta^c$ is a partial order then there are no `causal loops' (a `causal loop' is formed when for any two configurations $f_1$ and $f_2$, both $f_1$ and $f_2$ are causes of each other).

Lemma~\ref{lem:characterizationIntervention}, below, characterizes how interventions affect causal chains by delineating conditions under which such chains are either preserved or disrupted.
It is proved in the appendix (\ref{sec:app}).

\begin{lemma}[Characterization of Interventions in Causal System Models]
\label{lem:characterizationIntervention}
Let $\mathcal{M} = (F, \Delta_{\mathcal{I}}, \Gamma)$ be a causal system model with causal projection $\mathcal{M}^c = (F^c, \Delta_{\mathcal{I}}^c, \Gamma^c)$.
Let $\theta = (C', I'_{C'})$ be an intervention yielding the intervened system $\mathcal{M}_\theta = (F, \Delta_{\mathcal{I}_\theta}, \Gamma)$.
If a causal chain $ (f_1, \dots, f_n) \in \text{Chain}(\mathcal{M}) $ does not involve any component in $C'$ as part of the cause for any transition, then this chain is preserved under intervention $\theta$. Formally, $\forall i  (1 \leq i < n), C' \cap \text{Cause}(f_i, f_{i+1}) = \emptyset \Rightarrow (f_1, \dots, f_n) \in \text{Chain}(\mathcal{M}_\theta)$.
Otherwise if $ (f_1, \dots, f_n)$ contains a configuration $f_i$ whose cause involves components in $C'$, and the intervention $\theta$ modifies the influence rules such that the causal transition to $f_{i+1}$ is invalidated, then the chain is disrupted. Formally, $\exists i  (1 \leq i < n) \text{ such that } C' \cap \text{Cause}(f_i, f_{i+1}) \neq \emptyset \text{ and } \langle \theta \rangle \neg (f_i \Delta_{\mathcal{I}_\theta} f_{i+1})$. 
\fillBox
\end{lemma}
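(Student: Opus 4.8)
The plan is to prove the two implications separately, checking the claim directly against the three defining clauses of a causal chain (Definition~\ref{def:causalchain}) and exploiting the component-local structure of the transition relation. The single observation driving both halves is that, by Definition~\ref{def:transitionrelation}, every basic transition updates exactly one component $c$ through its own rule $\mathcal{I}_c$, while by Definition~\ref{def:intervention} the intervention $\theta = (C', \mathcal{I}'_{C'})$ leaves $\mathcal{I}_c$ unchanged for every $c \notin C'$. Consequently the basic transitions that update a non-target component coincide in $\Delta_{\mathcal{I}}$ and $\Delta_{\mathcal{I}_\theta}$, and only transitions updating a component of $C'$ can possibly differ. I would record this as a preliminary claim, since it localizes the entire effect of the intervention to $C'$-updates.

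For the preservation direction, I would assume $C' \cap \mathrm{Cause}(f_i, f_{i+1}) = \emptyset$ for every $i$ and verify each clause of Definition~\ref{def:causalchain} for the same sequence in $\mathcal{M}_\theta$. For the realizability clause, I would show that a transition witnessing $f_i \Delta^+_{\mathcal{I}} f_{i+1}$ can be taken to use only updates of components outside $C'$, so that the preliminary claim gives $f_i \Delta^+_{\mathcal{I}_\theta} f_{i+1}$; the invariance of the cause across the step, $f_i(c) = f_{i+1}(c)$ for $c$ in the cause, persists because those components are governed by unchanged rules. For the causal clause, I would re-run Definition~\ref{def:cause}: the actuality condition $\Diamond^+(\psi_E \land \chi_C)$ transfers by the same argument, while the counterfactuality formula $\langle\theta\rangle(\chi_{\mathcal{W}} \ast \chi'_{C\setminus\mathcal{W}}) \to \Box^+\neg(\psi_E \land \chi_C)$ and the minimality clause transfer because they quantify over behaviour changes of the cause components, all of which lie outside $C'$ and hence evolve identically in both models.

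For the disruption direction, the argument is direct. Choosing an index $i$ with some $c \in C' \cap \mathrm{Cause}(f_i, f_{i+1})$, the original transition $f_i \to f_{i+1}$ realizing the causal step relied on the rule $\mathcal{I}_c$; the intervention replaces it with $\mathcal{I}'_c$, and by hypothesis this invalidation means that no update of $c$ under $\mathcal{I}'_c$ reproduces $f_{i+1}$ from $f_i$. Since the components outside $C'$ are held fixed along that step, it follows that $f_i \not\Delta_{\mathcal{I}_\theta} f_{i+1}$, which is exactly the witness for $\langle\theta\rangle\neg(f_i \Delta_{\mathcal{I}_\theta} f_{i+1})$; this breaks the realizability clause of Definition~\ref{def:causalchain}, so the chain is disrupted in $\mathcal{M}_\theta$.

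The hard part will be the counterfactuality clause in the preservation direction, together with a routing subtlety it shares with the realizability clause: a sequence realizing $f_i \Delta^+ f_{i+1}$, or one of the counterfactual alternatives quantified over in the AC2 analogue, may a priori pass through intermediate configurations whose updates touch $C'$, even when the cause set itself is disjoint from $C'$. I would resolve this by arguing that any such $C'$-update can be deleted or reordered out of the witnessing sequence without altering the behaviours of the cause components or the effect components $\mathcal{C}_E$, using the component-locality of Definition~\ref{def:transitionrelation} to permute commuting single-component updates. Establishing this normal form for the witnessing paths is the technical crux on which both transferred clauses rest.
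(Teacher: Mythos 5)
Your overall skeleton matches the paper's own proof: localize the effect of $\theta$ to updates of components in $C'$ (since, by Definition~\ref{def:transitionrelation}, each basic transition updates exactly one component and, by Definition~\ref{def:intervention}, $\theta$ replaces only the rules $\mathcal{I}_c$ for $c\in C'$), then re-check the clauses of Definition~\ref{def:causalchain}, with the disruption half being essentially immediate because its hypothesis already asserts that the transition is invalidated. Where you and the paper part ways is exactly at what you call the crux. The paper's proof simply asserts that, because $C'$ is disjoint from every $\mathrm{Cause}(f_i,f_{i+1})$, ``the influence rules governing those transitions remain unchanged,'' and concludes that $\Delta_{\mathcal{I}_\theta}$ agrees with $\Delta_{\mathcal{I}}$ along the chain; it never addresses the routing issue you isolate. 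Your diagnosis of that issue is correct and is genuinely sharper than the paper: by Definition~\ref{def:cause} a cause set contains only components whose behaviour is \emph{invariant} from $f_i$ to $f_{i+1}$, whereas a witnessing path for $f_i\,\Delta^+_{\mathcal{I}}\,f_{i+1}$ is driven by the rules of the components that \emph{update} along it, so disjointness of $C'$ from the causes gives no control whatsoever over whether $C'$-components update en route.

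The gap is in your proposed resolution of that crux: the deletion/reordering normal form does not exist in general. First, it can perfectly well happen that $f_i(c)\neq f_{i+1}(c)$ for some $c\in C'$ while $C'\cap\mathrm{Cause}(f_i,f_{i+1})=\emptyset$ — indeed this is the typical case, precisely because components that change are never members of the cause — and then \emph{every} path realizing $f_i\,\Delta^+\,f_{i+1}$ must contain a $c$-update, so no surgery can remove it, and after $\theta$ replaces $\mathcal{I}_c$ by $\mathcal{I}'_c$ that step may simply not exist in $\Delta_{\mathcal{I}_\theta}$. Second, single-component updates do not commute: if $c\in\mathsf{Inf}(d)$, then updating $d$ before versus after a $c$-update yields different outputs for $d$, and deleting a $c$-update changes what every downstream component with $c$ in its influence context computes — so the surgery can alter the behaviours of the effect components $\mathcal{C}_E$ and of the endpoints, which is exactly what you need it not to do. The same obstruction defeats the transfer of the counterfactuality clause. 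So your preliminary claim and your identification of the hard step are right, but the hard step cannot be closed by path surgery under the stated hypothesis; it would require strengthening the hypothesis (e.g., that no component of $C'$ updates along any witnessing path, or that consecutive configurations of the chain agree on $C'$, or that $C'$-components lie outside the influence contexts of all updating components). That stronger assumption is, implicitly, what the paper's one-line justification is using; neither your argument nor the paper's establishes the lemma from the hypothesis as formally stated.
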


The following theorem, proved in the appendix, 
relates our approach to modelling causes in systems with the HP framework \cite{HalpernActualCausality}: 

\begin{theorem}
\label{thm:HPequivalence}
Let $\mathcal{M} = (F, \Delta_{\mathcal{I}}, \Gamma)$ be a causal system model over a finite component set $\mathcal{C}$, where causes are defined via causal chains satisfying actuality, counterfactual dependence, and minimality (cf. Definition~\ref{def:cause}). Construct a corresponding HP causal model 
$M = \langle U, V, F \rangle$,
where $V = \{ V_c \mid c \in \mathcal{C} \}$ with $\operatorname{dom}(V_c)=\mathbb{B}(c)$ and the structural equations in $F$ are induced by the influence rules $\mathcal{I}$ of $\mathcal{M}$. For any configuration $f_2 \in F$, define the effect formula 
$
\varphi_{f_2} = \bigwedge_{c \in \mathcal{C}} (V_c = f_2(c)).
$
Then, if there exists a causal chain in $\mathcal{M}$ from an initial configuration $f_1$ to $f_2$, one can extract a candidate cause; that is, a subset $\vec{X} \subseteq V$ and an assignment $\vec{x}$ (with a corresponding context $\vec{u}$) such that the assignment $\vec{X} = \vec{x}$ satisfies the HP criteria (AC1--AC3) for being an actual cause of $\varphi_{f_2}$ in $(M,\vec{u})$. In other words, the existence of a causal chain from $f_1$ to $f_2$ in $\mathcal{M}$ implies that there is a corresponding actual cause in the HP model.
\fillBox 
\end{theorem}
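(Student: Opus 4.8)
The plan is to establish a translation from the configuration-based causal machinery of $\mathcal{M}$ into the variable-based HP model $M = \langle U, V, F\rangle$, and then show that a causal chain witnessing $Cause(f_1, f_2)$ in $\mathcal{M}$ yields data $(\vec{X}, \vec{x}, \vec{u})$ meeting each of the three HP clauses AC1--AC3 in turn. First I would fix the dictionary: each component $c \in \mathcal{C}$ becomes an endogenous variable $V_c$ with $\operatorname{dom}(V_c) = \mathbb{B}(c)$; the exogenous context $\vec{u}$ encodes the initial configuration $f_1$ (and, where needed, the choice of which component fires at each transition, since $\Delta_{\mathcal{I}}$ is nondeterministic); and the structural equation for $V_c$ is read off from the influence rule $\mathcal{I}_c$ restricted to the arguments indexed by $\mathsf{Inf}(c)$, so that the parent set $pa_c$ of the HP model coincides with $\mathsf{Inf}(c)$. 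The candidate cause is then the subset $\vec{X} = \{V_c \mid c \in \mathcal{C}'\}$ with assignment $\vec{x} = (f_1(c))_{c \in \mathcal{C}'}$, where $\mathcal{C}'$ is the cause extracted along the chain via Definition~\ref{def:cause}.

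With the dictionary fixed, I would verify the three clauses one at a time, reusing the three conditions of Definition~\ref{def:cause} essentially as drop-in witnesses. For \textbf{AC1}, the actuality condition $\Diamond^+(\psi_E \land \chi_C)$ says $f_1 \Delta^+_{\mathcal{I}} f_2$ with the cause-components agreeing on $f_1$ and $f_2$; under the translation this is exactly the assertion that in the solution of $M$ under context $\vec{u}$ both $\vec{X} = \vec{x}$ and $\varphi_{f_2}$ hold, which is HP's AC1. For \textbf{AC2}, I would match the counterfactual clause of Definition~\ref{def:cause} --- the witness set $\mathcal{W}$ together with $\langle\theta\rangle(\chi_{\mathcal{W}} \ast \chi'_{C\setminus\mathcal{W}}) \to \Box^+\neg(\psi_E \land \chi_C)$ --- to the HP clause AC2($a^m$) of \cite{HalpernIJCAI2015}: the set $\mathcal{W}$ becomes the HP witness set $\vec{W}$ held at its actual values, and the intervention $\theta$ that perturbs some component of $\mathcal{C}'\setminus\mathcal{W}$ corresponds to the HP counterfactual setting $\vec{X} \leftarrow \vec{x}'$, with the conclusion $\Box^+\neg(\psi_E \land \chi_C)$ translating to the nonoccurrence of $\varphi_{f_2}$. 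For \textbf{AC3}, minimality transfers verbatim: no proper $\mathcal{C}'' \subsetneq \mathcal{C}'$ satisfies conditions 1--2, which is precisely the statement that no proper subset of $\vec{X}$ already satisfies AC1 and AC2.

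The main obstacle is the faithful handling of the counterfactual step, and it has two facets. The first is aligning the \emph{interventional} semantics: Definition~\ref{def:intervention} permits interventions that replace whole influence rules (altering the structural equations), whereas the HP AC2 clause fixes \emph{values} of the variables in $\vec{X}$. I would argue that for the specific interventions used in the cause definition it suffices to consider value-fixing interventions --- i.e.\ rules of the constant form that pin $V_c$ to an off-actual value --- so that the rule-replacement intervention collapses to an HP-style $\vec{X} \leftarrow \vec{x}'$; this needs an explicit lemma that such constant interventions are expressible and that they realize the disjunction $\chi'_{C\setminus\mathcal{W}}$. The second facet is the appearance of $\ast$ and the interface in the counterfactual formula: the separating conjunction $\chi_{\mathcal{W}} \ast \chi'_{C\setminus\mathcal{W}}$ presupposes that $\mathcal{W}$ and $C\setminus\mathcal{W}$ sit in a valid interface decomposition, and I would need to check that the induced partition of $\mathcal{C}$ respects the influence-locality conditions of Definition~\ref{def:interfaceadmitting} so that holding $\vec{W}$ fixed while perturbing $\mathcal{C}'\setminus\mathcal{W}$ is well-defined in $M$ --- otherwise the witness set held fixed in $\mathcal{M}$ may not correspond to a legitimate HP witness set. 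I expect these two checks, rather than the AC1 or AC3 verifications, to carry the real technical weight; the remainder is bookkeeping in the translation dictionary, and I would note that the theorem only claims the forward direction (causal chain $\Rightarrow$ HP actual cause), so no converse extraction is required.
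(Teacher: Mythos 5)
Your proposal follows essentially the same route as the paper's own proof: the same component-to-variable dictionary with $\operatorname{dom}(V_c)=\mathbb{B}(c)$ and parents given by $\mathsf{Inf}(c)$, the candidate cause $\vec{X}$ extracted from the causal subsets along the chain, and the three conditions of Definition~\ref{def:cause} mapped one-to-one onto AC1--AC3. The paper's argument is itself only a sketch that omits the details, and the two obstacles you flag --- encoding the asynchronous choice of which component fires into the exogenous context $\vec{u}$, and reconciling rule-replacement interventions with HP's value-fixing interventions (together with the interface-locality check for $\chi_{\mathcal{W}} \ast \chi'_{C\setminus\mathcal{W}}$) --- are exactly the points the paper's proof glosses over, so your plan is, if anything, more explicit about where the technical weight of a complete proof would lie.
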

\begin{proof}
    Refer to the appendix (\ref{sec:app}). \fillBox
\end{proof}

\section{Security examples}
\label{sec:examples}
Modern cloud-native applications often decompose functionality into independently deployable microservices consisting of a very large number of services. Microservices architecture promotes cost optimization and sustainability by enabling selective scaling of components based on demand, minimizing resource use and waste. It also allows for smaller, independent updates, reducing the need for extensive end-to-end testing compared to monolithic systems. 
This shift from monolithic to microservice-based architectures has transformed how software is designed, deployed, and maintained \cite{Microservices2016} (cf. refer to this whiteppaer from Amazon Web Services \cite{AWS2023} for technical details).
% In particular, in microservice-based architecture, an application typically consists of a large number of services, each adhering to the Single Responsibility Principle (SRP) and communicating over lightweight protocols such as REST (Representational State Transfer) or RPC (Remote Procedure Call).
% The Single Responsibility Principle stipulates that a software module should be responsible to one and only one actor.

This evolution has also intensified the need for rigorous tools in forensic analysis and audit. A framework for actual causation, grounded in Halpern's approach \cite{HalpernActualCausality} but adapted to model system transitions, is well-suited as a first step in addressing these challenges (see also \cite{Collinson2012,BCIP2024} on model design perspectives).

\subsection{Microservices}
\label{subsec:microservices} 
Since in microservice-based architecture, an application typically consists of a large number of loosely coupled, fine-grained services, accurately reconstructing inter-service call graphs is non-trivial. Dependencies evolve at runtime, and often lack static configurations. \cite{Microservices2016}.
These difficulties have motivated causal-discovery techniques in industry-facing tools \cite{NEURIPS2022RCA}, stressing the need for a rigorous framework such as the one proposed here.

Decomposition into loosely coupled services with explicit APIs (Application Program Interface) mirrors our formal notions of components, configurations, and interfaces, making microservices an ideal case study. Their ubiquity in large-scale deployments ensures industrial relevance, failures often arise from identifiable interactions among just a few services, and operational practice already employs one-shot mitigations (rolling updates, circuit breakers, traffic re-routes) that correspond to the atomic interventions in our logic.
Since failures frequently trace back to a small cluster of inter-service interactions, actual causality is a useful notion in determining the precise chain of responsibility for post-incident audits and forensic analysis in microservice deployments.

\subsection{Graph-based paradigms}
\label{subsec:graphs}
Several existing tools employ \emph{causal dependency graph} to trace how anomalies propagate through microservice ecosystems. For instance, Microscope \cite{Microscope2018} infers service dependencies in real time to build a \emph{service impact graph}, which it combines with runtime anomalies to derive a causality graph. Groot \cite{Groot}, designed for large-scale systems, constructs a global dependency graph and, upon alerts, extracts a focused subgraph around affected services. It aggregates events (e.g., CPU spikes, HTTP errors, code changes) and uses domain-specific rules to assemble an \emph{event causality graph}. While effective for diagnostics, such tools treat causality observationally and do not support formal reasoning about interventions or counterfactuals \cite{BDFJMPZ2021}.

Collectively, these systems illustrate the power of graph-based methods in \emph{heuristically} localizing root causes.
Yet, they fall short of providing a \emph{rigorous} foundation for \emph{actual causation}; that is, a precise characterization of `which component state (or event) truly \textit{caused} the observable failure', in the sense of \emph{Halpern--Pearl} counterfactual dependence \cite{HalpernPearl2005}.

We argue that, in the absence of a formal specification language for expressing actual causality in dynamic systems (in contrast to do-calculus, which is designed for causal inference), such approaches remain inadequate for purposes of audit. This limitation is particularly acute in high-stakes scenarios, such as microservice-based infrastructures in financial exchanges, where root cause analysis is often conducted through ad hoc means.
For instance, the consultation paper issued by the UK Financial Conduct Authority \cite{FCA2020} exemplifies the use of informal causal chain-based analysis for forensics and audit.

%%% keep the text below%%%%
While full empirical validation is beyond the scope of this theoretical development, our framework is conceptually compatible with existing microservice monitoring tools (such as \cite{Microscope2018,Groot}), where detected anomalies correspond to particular configurations or behaviour assignments in our model. Future empirical work could involve systematically translating observed anomalies and performance metrics into formal configurations and causal chains.

\subsection{Logic-based causation models in reliability engineering}
\label{subsec:reliability}
The use of logical languages, especially modal and substructural logics, as rigorous tools to specify and reason about system properties has a long history in formal methods.
Substructural logics, such as Separation Logic \cite{IO2001,Reynolds2002}, capture notions of \emph{local reasoning}, allowing one to decompose large systems into \emph{independent} subsystems or configurations. 
By introducing a \emph{separating conjunction} ($\ast$), one can assert that two sub-configurations do not interfere. 
This precision is valuable in reasoning about microservice architectures, where container boundaries and inter-service links must be sharply distinguished. 
However, causal reasoning additionally requires a formal account of \emph{intervention}.

As discussed in Section~\ref{subsec:actualCause}, Halpern--Pearl's (HP) causal framework encodes dependencies via structural equations, modelling interventions by fixing the values of selected variables. 
In particular, the modal operator $[\mathsf{do}(X \coloneqq x)]\varphi$ expresses that, after forcibly setting variable $X$ to $x$, the proposition $\varphi$ holds \cite{HalpernIJCAI2015}.

\subsection{Modelling microservices}
\label{subsec:modelling-micro} 
In this section, we illustrate how to apply the system‐modelling framework to a small microservice deployment.
We then show how to decompose the system into subsystems with a shared interface.
We also show how strategic queries can be formulated in this approach using conterfactuals.

\subsubsection{Components and behaviours}
\label{subsec:microserviceComponents}
In a typical web application using the microservices architecture, the following design pattern is often used: $\mathsf{Auth}$ handles user authentication, $\mathsf{UserDB}$ manages credential storage and lookup, $\mathsf{ProfileSvc}$ provides user profile information, $\mathsf{Logger}$ records system events and requests, and $\mathsf{FrontEnd}$ serves as the user-facing component coordinating interactions among the back-end services. 

In our framework, this corresponds to a set of components 
\[
\mathcal{C} =\{\mathsf{Auth},\mathsf{UserDB},\mathsf{ProfileSvc},\mathsf{Logger},\mathsf{FrontEnd}\}
\]
Each component $c \in \mathcal{C}$ is associated with a set of permissible behaviours (the behaviour names are self-explanatory):
\[{\small 
\begin{array}{lcl} 
\mathbb{B}(\mathsf{Auth}) & = & \{\mathit{idle},\mathit{authSucc},\mathit{authFail}\} \\
\mathbb{B}(\mathsf{UserDB}) & = & \{\mathit{idle},\mathit{dbOK},\mathit{dbError}\}
\end{array}
\begin{array}{lcl}
\mathbb{B}(\mathsf{ProfileSvc}) & = & \{\mathit{idle},\mathit{profileOK}, \mathit{TimeOut}\} \\
\mathbb{B}(\mathsf{Logger}) & = & \{\mathit{idle},\mathit{logged},\mathit{logFail}\} \\
\mathbb{B}(\mathsf{FrontEnd}) & = & \{\mathit{idle},\mathit{serving},\mathit{error}\} 
\end{array}}
\]
A\! \emph{configuration}\! $f\!\in\!F$ is a function $f\!:\! \mathcal{C} \rightarrow \bigcup_{c\in\mathcal{C}}\mathbb{B}(c)$,
with $f(c)\!\in\! \mathbb{B}(c)$ for each $c$.

\medskip

We now specify, for each component $c\in\mathcal{C}$, an \emph{influence context} $\mathsf{Inf}(c)$ (the subset of other components whose behaviours can affect $c$), and then give a local influence rule $\mathcal{I}_c$ as in Definition \ref{def:influence}:

\begin{description}
  \item[1.] $E(\mathsf{Auth}) = \{\mathsf{FrontEnd},\mathsf{UserDB}\}$. The authentication service first receives a request from the front end; if it reaches out to the user database for credentials, then $\mathsf{UserDB}$'s state may induce a success or failure.

  \item[2.] $E(\mathsf{UserDB}) = \{\mathsf{Auth}\}$. The database processes queries only when the auth service requests it.

  \item[3.] $E(\mathsf{ProfileSvc}) = \{\mathsf{Auth},\mathsf{UserDB}\}$. The profile service fetches user data only after successful authentication and a database read.

  \item[4.] $E(\mathsf{Logger}) = \{\mathsf{Auth},\mathsf{ProfileSvc},\mathsf{FrontEnd}\}$. The logger records each request, authentication attempt, and profile lookup.

  \item[5.] $E(\mathsf{FrontEnd}) = \{\mathsf{Auth},\mathsf{ProfileSvc},\mathsf{Logger}\}$. The front-end serves pages only after successful authentication and profile data, and logs its own error or serving state.
\end{description}

Accordingly, we define local influence rules $\mathcal{I}_c:\mathbb{B}(c)\times\prod_{d\in \mathsf{Inf}(c)}\mathbb{B}(d)\to\mathbb{B}(c)$ for each $c$.
Below, we write $\mathcal{I}_c(b_c,(b_d)_{d\in \mathsf{Inf}(c)})$ for the output behaviour, given current behaviour $b_c$ of $c$ and behaviours $b_d$ of each $d\in \mathsf{Inf}(c)$.
We omit trivial cases where a component remains idle if nothing relevant changes.

\smallskip 
\noindent{\bf Authentication service} $\mathcal{I}_{\mathsf{Auth}}$ caters to all authentication activities required for interaction with external users.
  \[
  \begin{array}{rcl}
    \mathcal{I}_{\mathsf{Auth}}\bigl(\mathit{idle},(\mathit{serving},\mathit{dbOK})\bigr) & = & \mathit{authSucc} \\
    \mathcal{I}_{\mathsf{Auth}}\bigl(\mathit{idle},(\mathit{serving},\mathit{dbError})\bigr) & = & \mathit{authFail}
  \end{array}
  \]
  That is, when the front end issues a login request (modelled as $\mathsf{FrontEnd}=\mathit{serving}$) and the database is OK, then $\mathsf{Auth}$ transitions to $\mathit{authSucc}$; if the database is in $\mathit{dbError}$, then $\mathsf{Auth}$ transitions to $\mathit{authFail}$.

\smallskip 
\noindent{\bf User database} $\mathcal{I}_{\mathsf{UserDB}}$:
  \[
  \begin{array}{rcl}
    \mathcal{I}_{\mathsf{UserDB}}\bigl(\mathit{idle},(\mathit{authSucc})\bigr) & = & \mathit{dbOK} \\ 
    \mathcal{I}_{\mathsf{UserDB}}\bigl(\mathit{idle},(\mathit{authFail})\bigr) & = & \mathit{dbError}
  \end{array}
  \]
  Thus, if $\mathsf{Auth}$ has just succeeded, the database returns $\mathit{dbOK}$; if $\mathsf{Auth}$ failed, the database reports $\mathit{dbError}$.

 \smallskip 
\noindent{\bf Profile service} $\mathcal{I}_{\mathsf{ProfileSvc}}$:
  \[
    \begin{array}{rcl}
    \mathcal{I}_{\mathsf{ProfileSvc}}\bigl(\mathit{idle},(\mathit{authSucc},\mathit{dbOK})\bigr)
    & = & \mathit{profileOK} \\ 
    \mathcal{I}_{\mathsf{ProfileSvc}}\bigl(\mathit{idle},(\mathit{authFail},\_)\bigr)
    & = & \mathit{TimeOut},
  \end{array}
  \]
  where $\_$ denotes `any database state'.
  In other words, if authentication succeeds \emph{and} the database is OK, the profile lookup succeeds; if authentication fails, the profile request times out.

  \smallskip 
\noindent{\bf Logger} $\mathcal{I}_{\mathsf{Logger}}$ acts as a shared interface between other components.
  \[
    \mathcal{I}_{\mathsf{Logger}}\bigl(\mathit{idle},(b_{\mathsf{Auth}},b_{\mathsf{ProfileSvc}},\mathit{serving})\bigr) = \mathit{logged}, 
    \]
    whenever $b_{\mathsf{Auth}}\in\{\mathit{authSucc},\mathit{authFail}\},~b_{\mathsf{ProfileSvc}}\in\{\mathit{profileOK},\mathit{TimeOut}\}$ 
  \[    
      \mathcal{I}_{\mathsf{Logger}}\bigl(\mathit{idle},(b_{\mathsf{Auth}},\mathit{error})\bigr) 
        = \mathit{logFail}
  \]
  Thus, if the front end is $\mathit{serving}$ and both $\mathsf{Auth}$ and $\mathsf{ProfileSvc}$ have transitioned to some success/failure state, the logger records it ($\mathit{logged}$). If the front end itself is in $\mathit{error}$, the logger may fail to log ($\mathit{logFail}$).

  \smallskip 
\noindent{\bf Front-end} $\mathcal{I}_{\mathsf{FrontEnd}}$: $\mathcal{I}_{\mathsf{FrontEnd}}\bigl(\mathit{idle},(b_{\mathsf{Auth}},b_{\mathsf{ProfileSvc}},b_{\mathsf{Logger}})\bigr) = \mathit{serving}$, if $b_{\mathsf{Auth}}=\mathit{authSucc}$ and $b_{\mathsf{ProfileSvc}}=\mathit{profileOK}$ and $b_{\mathsf{Logger}}=\mathit{logged}$.
Otherwise, it equals $\mathit{error}$ if $b_{\mathsf{Auth}}=\mathit{authFail} \lor b_{\mathsf{ProfileSvc}}=\mathit{TimeOut} \lor b_{\mathsf{Logger}}=\mathit{logFail}$.
  In other words, the front end will serve the requested page only if authentication and profile lookup succeed and the logger has recorded those events; otherwise it enters an $\mathit{error}$ state.

\smallskip 
\noindent{\bf The system model} Collecting everything, we obtain a \emph{system model} $
\mathcal{M} = \bigl(\mathcal{C},\mathcal{B},\mathcal{I},F,\Delta_{\mathcal{I}},\Gamma\bigr)$, where 
\begin{enumerate}
  \item $\mathcal{C}$ is the component set above, 
  \item $\mathcal{B} = \bigcup_{c\in\mathcal{C}}\mathbb{B}(c)$ is the union of all behaviour sets, 
  \item $\mathcal{I} = \{\mathcal{I}_c\mid c\in\mathcal{C}\}$ is the family of influence rules just defined.
  \item $F$ is the set of all full configurations $f:\mathcal{C}\to\bigcup_c \mathbb{B}(c)$, 
  \item $\Delta_{\mathcal{I}} \subseteq F \times F$ is the one‐step transition relation induced by $\mathcal{I}$, 
  \[
      (f,f') \in \Delta_{\mathcal{I}}
      \quad\text{iff}\quad
      \forall c\in\mathcal{C},f'(c)=\mathcal{I}_c\bigl(f(c),(f(d))_{d\in \mathsf{Inf}(c)}\bigr)
  \] 
  \item $\Gamma: \mathcal{P} \to 2^{F}$ is a valuation that assigns, for each atomic proposition in a chosen propositional vocabulary $\mathcal{P}$, the set of configurations in which it holds. For example, $
      \Gamma(p_{\mathsf{FrontEnd}=\mathit{error}})
       =
      \{f\in F \mid f(\mathsf{FrontEnd})=\mathit{error}\}$,
    and similarly for propositions like $p_{\mathsf{Auth}=\mathit{authFail}}$, and so on.
\end{enumerate}

To show that $\mathcal{M}$ admits a non-trivial interface decomposition, partition $\mathcal{C}$ into $C_1 = \{\mathsf{Auth},\mathsf{UserDB},\mathsf{Logger}\}$ and $C_2 = \{\mathsf{ProfileSvc}, \mathsf{FrontEnd},\mathsf{Logger}$. Note that $C_1\cup C_2=\mathcal{C}$ and $I = C_1\cap C_2 = \{\mathsf{Logger}\}$ is the interface.
We can check the two conditions of Definition \ref{def:systemmodel} (Interface‐admitting System Model). 

\subsection{Strategic decision queries} \label{subsec:strategic-queries} 

We now show how to \emph{formalize decision‐making questions} in the microservice example without developing a full game‐theoretic apparatus. Note that it could have been formulated in the setting of a multi‐agent game.

\medskip

\noindent\textbf{Notation} Recall $\varphi_{\mathrm{fail}} = p_{\mathsf{FrontEnd}=\mathit{error}}$, and the three candidate interventions:
\[
\begin{aligned}
  \theta_1 &= \bigl(\{\mathsf{UserDB}\},\{\mathcal{I}'_{\mathsf{UserDB}}\}\bigr), 
    \quad
    \mathcal{I}'_{\mathsf{UserDB}}(\cdot) \coloneqq \mathit{dbOK} \\
  \theta_2 &= \bigl(\{\mathsf{FrontEnd}\},\{\mathcal{I}'_{\mathsf{FrontEnd}}\}\bigr), 
    \quad
    \mathcal{I}'_{\mathsf{FrontEnd}}(\_,\_) \coloneqq \mathit{servingCache} \\
  \theta_3 &= \bigl(\{\mathsf{ProfileSvc}\},\{\mathcal{I}'_{\mathsf{ProfileSvc}}\}\bigr), 
    \quad
    \mathcal{I}'_{\mathsf{ProfileSvc}}(\_,\_) \coloneqq \mathit{profileStale}
\end{aligned}
\]

\medskip

\noindent\textbf{Guaranteed recovery}  
Which interventions $\theta_i$ guarantee $\neg\varphi_{\mathrm{fail}}$ from configuration $f_2$?  Formally, $(\mathcal{M},f_2)\models\langle \theta_i \rangle \Box\neg\varphi_{\mathrm{fail}}$.
In our example, 
\[
\mbox{$(\mathcal{M},f_2)\models\langle \theta_1 \rangle \Box\neg\varphi_{\mathrm{fail}}$, 
$\mathcal{M},f_2)\models\langle \theta_2 \rangle \Box\neg\varphi_{\mathrm{fail}}$, and  
  $(\mathcal{M},f_2)\not\models\langle \theta_3 \rangle \Box\neg\varphi_{\mathrm{fail}}$}
\]
Thus $\theta_1$ (repairing the DB) and $\theta_2$ (cache‐serve) are \emph{valid recovery policies}, while $\theta_3$ is not.

\medskip
\noindent\textbf{Minimal‐cost intervention}  Suppose we assign costs to each $\theta_i$: $\mathit{Cost}(\theta_1) = 10$, $\mathit{Cost}(\theta_2) = 5$, $\mathit{Cost}(\theta_3) = 2$, where, for example, repairing the database is more expensive than re-routing to the cache. We wish to choose the $\theta_i$ that (i) satisfies $\langle \theta_i \rangle \Box\neg\varphi_{\mathrm{fail}}$ and (ii) minimizes $\mathit{Cost}(\theta_i)$. The corresponding formula might be 
\[{\small  
    (\mathcal{M},f_2)\models \langle \theta_i \rangle \Box\neg\varphi_{\mathrm{fail}}
    \;\mbox{and}\; 
    (\mathcal{M},f_2)\models \langle\zeta_j\rangle \Box\neg\varphi_{\mathrm{fail}} \;\mbox{implies}\;  (\mathit{Cost}(\theta_i) \le \mathit{Cost}(\zeta_j))
}
\]
for some $\theta_i$ and for all $\zeta_j$ (a predicate version of the logic could be used to internalize the  quantifications).
In our setting, $\theta_2$ is chosen, since $\mathit{Cost}(\theta_2)=5$ is the lowest cost among $\{\theta_1,\theta_2\}$ that guarantee recovery.

\medskip

\noindent\textbf{Fallback vs. repair trade‐off}  If $\mathit{Utility}(\theta_i)$ combines cost and the user‐satis\-faction penalty (e.g., stale data penalty), we can write  
$\mathit{Utility}(\theta_i) = -\mathit{Cost}(\theta_i) - \mathit{Penalty}(\theta_i)$,
and ask for $(\mathcal{M},f_2)\models \langle \theta_i \rangle \Box\neg\varphi_{\mathrm{fail}}$ and  $(\mathcal{M},f_2)\models \langle\zeta_j\rangle\Box\neg\varphi_{\mathrm{fail}}$
implies $(\mathit{Utility}(\theta_i)\ge \mathit{Utility}(\theta_j))$, for some $\theta_i$ and for all $\zeta_j$.
This yields the `best trade‐off' policy under a combined cost‐penalty metric.

\medskip

\noindent\textbf{Localized decision-making though separation (using $\ast$)} In this example, using the interface $\{\mathsf{Logger}\}$, we can ensure that an intervention on one subsystem does not violate invariants in the other. For instance, when applying $\theta_2$, we require 
\[
  (\mathcal{M},f_2)\models\langle \theta_2 \rangle (\varphi_{C_1} \ast \varphi_{C_2}) 
\]
where
$\varphi_{C_1} = p_{\mathsf{UserDB}=\mathit{dbError}} \land p_{\mathsf{Logger}=\mathit{logged}}$
and $\varphi_{C_2} = p_{\mathsf{FrontEnd}=\mathit{servingCache}} \land p_{\mathsf{Logger}=\mathit{logged}}$. 
This asserts that after forcing the front‐end to $\mathit{servingCache}$, the $C_1$‐subsystem (DB--Auth--Logger) can continue with $\mathsf{Logger}$ $=$ $\mathit{logged}$ and $\mathsf{UserDB}$ $=$ $\mathit{dbError}$, while the $C_2$‐subsystem (ProfileSvc--FrontEnd--Logger) enters a safe `cache' configuration.
Thus the intervention respects subsystem locality and prevents cross‐subsystem side‐effects.

\medskip
\noindent\textbf{Strategic perspective.}
This framework could have been enriched by viewing an orchestrator (defender) and external failures (attackers) as players: the defender's strategy set would be $\{\theta_{1},\theta_{2},\theta_{3},\dots\}$, while the adversary's `strategy' is the choice of which component fails next. A natural payoff function rewards the absence of failures minus the cost of interventions.
While our logical intervention--queries already suffice to guide practical decision-making without constructing a full game model, the framework naturally suggests a fuller game-theoretic treatment.
We therefore leave the explicit formulation of full strategy spaces, payoff functions, and equilibrium concepts to future work.

\subsection{Actual Causal Analysis}
\label{subsec:causal-queries}
Given two configurations, $f_1,f_2\in F$, We now exhibit how to identify a set of components $\mathcal{C}'$ as a \emph{cause} of $f_2$ from $f_1$ per Definition \ref{def:cause}.
Intuitively, $f_1$ will be read as a normal `no‐error" configuration, while $f_2$ exhibits a front‐end error. We show that a misconfiguration in $\mathsf{UserDB}$ (in $C_1$) is the actual cause of $f_2$.

\medskip

\noindent\textbf{Configuration $f_1$:} All components are idle or behave in a successful manner. All services await incoming requests.

\[
\begin{array}{lclclcl}
  f_1(\mathsf{Auth}) & = & \mathit{idle} & \quad & f_1(\mathsf{Logger}) & = & \mathit{idle} \\

  f_1(\mathsf{UserDB}) & = & \mathit{idle} & \quad &
  f_1(\mathsf{FrontEnd}) & = & \mathit{idle} \\

  f_1(\mathsf{ProfileSvc}) & = & \mathit{idle}
\end{array}
\]

\medskip

\noindent\textbf{Configuration $f_2$:}  (A front‐end error due to a database fault.)
\[
\begin{array}{lclclcl}
  f_2(\mathsf{Auth})       & = & \mathit{authFail}      & \quad &
  f_2(\mathsf{Logger})     & = & \mathit{logged}        \\[4pt]

  f_2(\mathsf{UserDB})     & = & \mathit{dbError}       & \quad &
  f_2(\mathsf{FrontEnd})   & = & \mathit{error}         \\[4pt]

  f_2(\mathsf{ProfileSvc}) & = & \mathit{profileTimeout}
\end{array}
\]

Here, the request reached the front end, $\mathsf{Auth}$ attempted to authenticate, but $\mathsf{UserDB}$ returned $\mathit{dbError}$, leading to $\mathsf{Auth}=\mathit{authFail}$, $\mathsf{ProfileSvc}=\mathit{profileTimeout}$, the logger recorded the events, and finally the front end transitioned to $\mathit{error}$.

\noindent\textbf{Effect Formula $\psi_E$.}  We consider the observable failure `FrontEnd is in error' as the effect, $\psi_E = p_{\mathsf{FrontEnd}=\mathit{error}}$.
Thus $\psi_E$ holds exactly in those configurations where $f(\mathsf{FrontEnd})=\mathit{error}$.

\noindent\textbf{Candidate Cause $\mathcal{C}'=\{\mathsf{UserDB}\}$.}  We claim that fixing $\mathsf{UserDB}$ in state $\mathit{dbError}$ (as in $f_2$) is an \emph{actual cause} of $f_2$ from $f_1$. To check Definition \ref{def:cause}, we let $\chi_C \;=\; p_{\mathsf{UserDB}=\mathit{dbError}}$
Intuitively, `$\mathsf{UserDB}$ is stuck in $\mathit{dbError}$' is our cause candidate.

\textbf{(Actuality.}  
We must show that there is a sequence of transitions from $f_1$ to $f_2$ in which $\mathsf{UserDB}$ remains $\mathit{dbError}$.
Indeed, consider the following one‐step transitions (written $f\Delta_{\mathcal{I}}f'$):

\[
\small                    
\setlength{\arraycolsep}{1pt}
\begin{array}{rcl}
  f_1 &
  \xrightarrow{\substack{\mathsf{FrontEnd}: \mathit{idle} \\ \to \mathit{serving}}} &
  f'_1\,
  \text{\parbox[t]{6.2cm}{where $f'_1(\mathsf{FrontEnd})=\mathit{serving}$, others unchanged.}}
  \\[4pt]

  f'_1 &
  \xrightarrow{\substack{\mathsf{UserDB}: \mathit{idle} \\ \to \mathit{dbError}}} &
  f'_2\,
  \text{\parbox[t]{6.2cm}{(misconfiguration injected).}}
  \\[4pt]

  f'_2 &
  \xrightarrow{\substack{\mathsf{Auth}: \mathit{idle} \\ \to \mathit{authFail}}} &
  f'_3\,
  \text{\parbox[t]{6.2cm}{since $E(\mathsf{Auth})=(\mathit{serving},\mathit{dbError})$.}}
  \\[4pt]

  f'_3 &
  \xrightarrow{\substack{\mathsf{ProfileSvc}: \mathit{idle} \\ \to \mathit{profileTimeout}}} &
  f'_4\,
  \text{\parbox[t]{6.2cm}{since $E(\mathsf{ProfileSvc})=(\mathit{authFail},\mathit{dbError})$.}}
  \\[4pt]

  f'_4 &
  \xrightarrow{\substack{\mathsf{Logger}: \mathit{idle} \\ \to \mathit{logged}}} &
  f'_5\,
  \text{\parbox[t]{6.2cm}{since it sees $(\mathit{authFail},\mathit{profileTimeout},\mathit{serving})$.}}
  \\[4pt]

  f'_5 &
  \xrightarrow{\substack{\mathsf{FrontEnd}: \mathit{serving} \\ \to \mathit{error}}} &
  f_2\,
  \text{\parbox[t]{6.2cm}{since $E(\mathsf{FrontEnd})=(\mathit{authFail},\mathit{profileTimeout},\mathit{logged})$.}}
\end{array}
\]

Throughout this run, once $\mathsf{UserDB}$ transitions to $\mathit{dbError}$, it stays in that state. Hence $\mathsf{UserDB}=\mathit{dbError}$ in all intermediate configurations, and eventually $f_2(\mathsf{FrontEnd})=\mathit{error}$. Thus $\Diamond^+(\psi_E \land \chi_C)$ holds. Moreover, in $f_1$ we indeed have $f_1(\mathsf{UserDB})=\mathit{idle}\neq\mathit{dbError}$, so the cause condition ``$\mathsf{UserDB}$ is set to $\mathit{dbError}$" is nontrivial.

\textbf{Counterfactual clause.}  
We must exhibit a witness set $\mathcal{W}\subseteq\mathcal{C}$ and show that if $\mathsf{UserDB}$ were \emph{not} set to $\mathit{dbError}$ (while keeping $\mathcal{W}$ fixed), then no run leads to $f_2$ exactly. Take $\mathcal{W} = \{\mathsf{Auth},\,\mathsf{ProfileSvc},\,\mathsf{Logger},\,\mathsf{FrontEnd}\}$. In other words, we hold all other components at their post‐failure states (in $f_2$) except $\mathsf{UserDB}$. To apply Definition \ref{def:cause}, we consider an intervention $\theta$ that forces all components in $\mathcal{W}$ to their $f_2$ values but \emph{does not} force $\mathsf{UserDB}$, allowing it to vary,
$\theta = (\mathcal{W},\{\mathcal{I}'_c : c\in\mathcal{W}\})$, $\mathcal{I}'_c\text{ simply sets }c\text{ to }f_2(c)\text{ immediately and stably.}$

Under this intervention, $\mathsf{UserDB}$ is free, and all other components behave exactly as in $f_2$.
Now, if we keep $\mathsf{Auth}=\mathit{authFail},\,\mathsf{ProfileSvc}=\mathit{profileTimeout},\,\mathsf{Logger}=\mathit{logged},\,\mathsf{FrontEnd}=\mathit{error}$ but let $\mathsf{UserDB}$ deviate from $\mathit{dbError}$ (i.e.\ $f_1'(\mathsf{UserDB})=\mathit{idle}$ or $\mathit{dbOK}$), then $\mathsf{Auth}$ could not have arrived at $\mathit{authFail}$ \emph{via} $\mathcal{I}_{\mathsf{Auth}}$ as defined, nor could $\mathsf{ProfileSvc}$ reach $\mathit{profileTimeout}$, nor could $\mathsf{FrontEnd}$ become $\mathit{error}$ under the same influence rules.
Concretely, with $\mathsf{UserDB}=\mathit{dbOK}$, one would get $\mathsf{Auth}=\mathit{authSuccess}$ and $\mathsf{ProfileSvc}=\mathit{profileOK}$, forcing $\mathsf{FrontEnd}=\mathit{serving}$. Hence no run $(f_1')\Delta_\mathcal{I} f'_2$ can yield $f'_2(\mathsf{FrontEnd})=\mathit{error}$. 
This establishes
\[
  \langle \theta \rangle \bigl(\chi_{\mathcal{W}}\ast \chi_{\{\mathsf{UserDB}\}\setminus\mathcal{W}}'\bigr) 
  \;\longrightarrow\;
  \Box^+ \neg\bigl(\psi_E \land \chi_C\bigr),
\]
verifying the counterfactual condition (AC2) of Definition \ref{def:cause}.

\textbf{Minimality.}  
Finally, no proper subset of $\{\mathsf{UserDB}\}$ is non-empty, so minimality holds vacuously.
Thus $\mathcal{C}'=\{\mathsf{UserDB}\}$ is indeed an \emph{actual cause} of $f_2$ from $f_1$.

\noindent\textbf{Interpretation.}  This formal analysis shows how a single misbehaving component in $C_1$ (the database) sufficed to produce the end‐user failure `FrontEnd error' in $C_2$, via the shared interface component `Logger.'  Because $\mathsf{Logger}$ mediates all observable events between subsystems, we can decompose the global system without losing information and still identify $\mathsf{UserDB}=\mathit{dbError}$ as the minimal actual cause of $\mathsf{FrontEnd}=\mathit{error}$.

\subsection{The necessity of the separating conjunction}
\label{subsec:why-star}

In the counterfactual clause of Definition~\ref{def:cause} we write $\langle\theta\rangle\bigl(\chi_{W} \ast \chi'_{C'\setminus W}\bigr)\;\longrightarrow\;
\Box^{+}\neg\bigl(\psi_{E}\land\chi_{C'}\bigr)$,
where $C'$ is the candidate cause set, $W\subseteq C'$ is the \emph{witness} subset, 
$\chi_{W}$ asserts that every witness component behaves exactly as in the actual run,
and $\chi'_{C'\setminus W}$ states that \emph{at least one} non-witness component now deviates.
The separating conjunction~$\ast$ is crucial: it guarantees that after an intervention~$\theta$ we can \emph{split the resulting configuration} into two overlapping sub-configurations that interact only through an explicit interface. 
Using an ordinary conjunction~$\land$ would invalidate later proofs that rely on compositionality.

In the context of our microservices example, taking the components, as before:
\[
C'=\{\mathsf{Auth},\mathsf{FrontEnd},\mathsf{Logger}\},\qquad 
W=\{\mathsf{Auth},\mathsf{Logger}\},
\]
and an intervention $\theta$ that rewrites the \textsf{FrontEnd} rule so it serves cached pages.
After~$\theta$ we obtain
\[
  \chi_{W}=p_{\mathsf{Auth}=\mathit{authSucc}}\ \land\ 
           p_{\mathsf{Logger}=\mathit{logged}},\qquad
  \chi'_{C'\setminus W}= \neg p_{\mathsf{FrontEnd}=\mathit{serving}}\ \lor\ 
                         \neg p_{\mathsf{FrontEnd}=\mathit{error}}.
\]
With~$\ast$ the post-intervention configuration decomposes into
\[
  C_{1}=\{\mathsf{Auth},\mathsf{Logger}\},\qquad
  C_{2}=\{\mathsf{FrontEnd},\mathsf{Logger}\},
\]
sharing the single interface component \textsf{Logger}. 
Locality is preserved, so the antecedent of AC2 holds. 
Replacing~$\ast$ by~$\land$ would force a \emph{single} global assignment satisfying both
$\chi_{W}$ and $\chi'_{C'\setminus W}$ simultaneously, hiding the structural separation.
This will undermine subsequent audits for root-cause analysis purpose.

The connective~$\ast$ enforces \emph{resource-sensitive locality}:  
it separates the unchanged witness region from the region where the intervention provokes change, ensuring that only the intended components vary while causal reasoning remains compositional.
In architectures such as micro-services, where subsystems are independently deployable but communicate through explicit APIs, $\ast$ Therefore is the correct logical tool for formulating the AC2 counterfactual condition.

\section{Logical metatheory}
\label{sec:metatheory}
In this section, it is established that our constructions obey a flavour of some well-established theorems that relate structural and behavioural properties of the logic.
In particular, we establish two van~Benthem-Bergstra-Hennessy-Milner \cite{Bergstra1994} theorems.
Inspired by sabotage logic \cite{Aucher2017}, we use a model-changing notion of bisimulation under intervention that extends the standard back-and-forth (zig and zag) conditions (cf. \cite{ModalLogicBook}) to account for structural modifications induced by interventions. Further details are given in 
% \cite{CCP2025full}.
the Appendix (\ref{sec:app}).

Theorem~\ref{thm:bisimLogicEquiv} establishes that two bisimulation equivalent interface-admitting system models are logically equivalent with respect to $\mathcal{L}(\langle\theta\rangle, \ast)$.
Theorem~\ref{thm:logicBisimEquiv} establishes the converse under specific restrictions on the language and for the subclass of interface-admitting system models with finitely many components and behaviours.
\begin{theorem}
\label{thm:bisimLogicEquiv}
If two interface-admitting pointed system models, $(\mathcal{M}_1,f_1)$ and $(\mathcal{M}_2,f_2)$ are bisimilar under intervention then for all formulae $\varphi \in L(\langle \theta \rangle,\ast)$, if $(\mathcal{M}_1,f_1) \models \varphi$ then $(\mathcal{M}_2,f_2) \models \varphi$. \fillBox
\end{theorem}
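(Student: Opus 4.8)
The plan is to argue by structural induction on the formula $\varphi \in \mathcal{L}(\langle\theta\rangle,\ast)$. Since bisimilarity under intervention is symmetric, it is cleanest to strengthen the claim to the biconditional: whenever $(\mathcal{M}_1,f_1)$ and $(\mathcal{M}_2,f_2)$ are bisimilar under intervention, $(\mathcal{M}_1,f_1)\models\varphi$ iff $(\mathcal{M}_2,f_2)\models\varphi$. Proving the biconditional trivialises the $\neg$ case and recovers the stated one-directional implication. Throughout I write $\mathcal{Z}$ for the witnessing relation and appeal to its defining clauses: atomic harmony, the forth/back (zig/zag) conditions for $\Delta_{\mathcal{I}}$, the analogous forth/back conditions for the intervention relation $R_\theta$ of Remark~\ref{remark:interveneRelation}, and the separation clause governing interface decompositions.

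First I would dispatch the classical modal cases following the van~Benthem--Hennessy--Milner template. For an atom $p$, atomic harmony gives $f_1\in\Gamma_1(p)$ iff $f_2\in\Gamma_2(p)$, settling the base case. The cases $\neg\varphi$ and $\varphi\land\psi$ follow immediately from the induction hypothesis, the former using symmetry of $\mathcal{Z}$. For $\Box\varphi$, given any $\Delta_{\mathcal{I}_2}$-successor $f_2'$ of $f_2$, the back condition supplies a $\Delta_{\mathcal{I}_1}$-successor $f_1'$ of $f_1$ with $(\mathcal{M}_1,f_1')\,\mathcal{Z}\,(\mathcal{M}_2,f_2')$; from $(\mathcal{M}_1,f_1)\models\Box\varphi$ and the induction hypothesis we get $(\mathcal{M}_2,f_2')\models\varphi$, and since $f_2'$ was arbitrary, $(\mathcal{M}_2,f_2)\models\Box\varphi$. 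The case $\Diamond\varphi$ is dual, using the forth condition.

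The intervention case $\langle\theta\rangle\varphi$ is handled by the intervention forth/back conditions. Suppose $(\mathcal{M}_1,f_1)\models\langle\theta\rangle\varphi$; by Definition~\ref{def:satisfaction} there is an intervention $\theta_{C'}$ and an $f_1'$ with $f_1\,\Delta_{\mathcal{I}_1}^{\theta}\,f_1'$ and $((\mathcal{M}_1)_{\theta},f_1')\models\varphi$. The essential point is that the bisimulation must \emph{persist under intervention}: the forth condition for $R_\theta$ yields a matching intervention on $\mathcal{M}_2$ together with a successor $f_2'$ in the intervened model such that $((\mathcal{M}_1)_{\theta},f_1')$ and $((\mathcal{M}_2)_{\theta},f_2')$ are again bisimilar under intervention. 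Since interface-admittingness is preserved under intervention (the remark following Definition~\ref{def:intervention}), the induction hypothesis applies in the intervened models and gives $((\mathcal{M}_2)_{\theta},f_2')\models\varphi$, whence $(\mathcal{M}_2,f_2)\models\langle\theta\rangle\varphi$; the converse uses the back condition symmetrically.

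The hard part will be the separating-conjunction case $\varphi\ast\psi$, which has no counterpart in the classical theorem. Suppose $(\mathcal{M}_1,f_1)\models\varphi\ast\psi$; then there are $\mathcal{C}_1,\mathcal{C}_2$ with $\mathcal{C}_1\cap\mathcal{C}_2$ an interface (Definition~\ref{def:interfaceadmitting}) such that $((\mathcal{M}_1)_{\mathcal{C}_1},f_1\restriction_{\mathcal{C}_1})\models\varphi$ and $((\mathcal{M}_1)_{\mathcal{C}_2},f_1\restriction_{\mathcal{C}_2})\models\psi$. To transport this to $\mathcal{M}_2$ I would invoke the separation clause of bisimulation under intervention, which must guarantee that an interface decomposition on one side lifts to one on the other, with the corresponding partial pointed models $((\mathcal{M}_1)_{\mathcal{C}_i},f_1\restriction_{\mathcal{C}_i})$ and $((\mathcal{M}_2)_{\mathcal{C}_i},f_2\restriction_{\mathcal{C}_i})$ themselves bisimilar under intervention; the induction hypothesis applied to $\varphi$ and $\psi$ then closes the case. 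The delicate obstacle is establishing that $\mathcal{Z}$ genuinely restricts to a bisimulation between the partial system models --- that restriction to a partition respects atomic harmony and both transition relations on the sub-configurations, and that the partial models remain interface-admitting so the induction hypothesis is applicable. This compositionality of bisimulation under restriction is where the finitary interface-admitting hypothesis and the local-influence constraints of Definition~\ref{def:interfaceadmitting} do the essential work, and it is the step most sensitive to the precise formulation of the separation clause given in the appendix.
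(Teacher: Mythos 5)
Your proposal follows the paper's own argument almost line for line: the same structural induction, the same handling of atoms, Booleans, and $\Box/\Diamond$ via the zig/zag clauses, and the same use of the intervention forth/back clauses (Zig$_\Theta$/Zag$_\Theta$) for $\langle\theta\rangle$; your explicit strengthening to a biconditional merely makes precise what the paper does implicitly (its negation case is only sound because both directions are in fact proved in each modal case).

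The one substantive divergence is the $\ast$ case, and there you have put your finger on a genuine soft spot but misplaced where it lives. The paper's definition of bisimulation under intervention has exactly five clauses --- Atom, Zig, Zag, Zig$_\Theta$, Zag$_\Theta$ --- and \emph{no} separation clause. The paper's own proof of the $\ast$ case instead asserts outright that ``by the premiss of this theorem and since the models admit of interface'' there exist bisimulations between the corresponding partial pointed models, together forming a conjugate decomposition of $(\mathcal{M}_2,f_2)$; that is, it takes as given precisely the restriction property you call ``the delicate obstacle,'' with no clause or lemma to support it, and moreover it quietly restricts attention to $\varphi = \psi_1 \ast \psi_2$ with $\psi_1,\psi_2$ star-free, so nested occurrences of $\ast$ are not covered by the sketch at all. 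Consequently, your appeal to ``the separation clause given in the appendix'' cannot be discharged against the paper's actual definition: either the definition of bisimulation under intervention must be augmented with such a clause (which is what your proof effectively presupposes), or one must prove a separate lemma that bisimilarity of interface-admitting global models restricts, over matching decompositions, to bisimilarity of the induced partial models --- including preservation of atomic harmony and of the partial transition relations on sub-configurations. Neither your proposal nor the paper supplies that argument; yours at least names it explicitly as the critical missing step, which is the honest position, but as written both proofs have the same hole in the same place.
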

\begin{proof}
    Refer to the appendix (\ref{sec:app}). \fillBox
\end{proof}
\begin{theorem}
\label{thm:logicBisimEquiv}
Given a formula $\varphi \in \mathcal{L}(\langle\theta\rangle)$, for any two interface-admitting pointed system models $(\mathcal{M}_1,f_1)$ and $(\mathcal{M}_2,f_2)$ with finitely many components and behaviours, if $(\mathcal{M}_1,f_1) \models \varphi$ implies $(\mathcal{M}_2,f_1) \models \varphi$ then there exists a bisimulation under intervention relating $(\mathcal{M}_1,f_1)$ and $(\mathcal{M}_2,f_2)$. \fillBox
\end{theorem}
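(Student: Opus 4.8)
The plan is to run the standard Hennessy--Milner argument, adapted to the model-changing setting, taking \emph{logical equivalence} as the candidate bisimulation. First I would observe that because $\mathcal{L}(\langle\theta\rangle)$ is closed under negation, the hypothesis --- that $(\mathcal{M}_1,f_1)\models\varphi$ implies $(\mathcal{M}_2,f_2)\models\varphi$ for every $\varphi$ --- is in fact equivalent to full logical equivalence: applying the hypothesis to $\neg\varphi$ yields the reverse implication. I would therefore define a relation $Z$ on pointed system models by $(\mathcal{N}_1,g_1)\,Z\,(\mathcal{N}_2,g_2)$ iff the two pointed models satisfy exactly the same $\mathcal{L}(\langle\theta\rangle)$-formulae, where $\mathcal{N}_1,\mathcal{N}_2$ range over the original models and all of their intervened variants $\mathcal{N}_\theta$ (this is essential, since $\langle\theta\rangle$ moves the point \emph{and} the model, in the spirit of the sabotage-style bisimulation of \cite{Aucher2017}). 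The goal is then to show that $Z$ is a bisimulation under intervention containing $((\mathcal{M}_1,f_1),(\mathcal{M}_2,f_2))$.

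The engine of the proof is \emph{image-finiteness}, which I would derive directly from the finiteness hypotheses. Since $\mathcal{C}$ and each $\mathbb{B}(c)$ are finite, the configuration set $F$ is finite, so $\Delta_{\mathcal{I}}$ is image-finite. Moreover each influence rule $\mathcal{I}_c$ is a function between finite sets and $\mathcal{C}$ has finitely many subsets, so there are only finitely many interventions $\theta$; consequently the set of pairs $(\theta,g')$ with $g\,\Delta^{\theta}_{\mathcal{I}}\,g'$ is finite for every $g$. This gives image-finiteness for both the $\Diamond/\Box$ accessibility and the combined intervention-plus-transition accessibility witnessed by $\langle\theta\rangle$.

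With image-finiteness in hand the back-and-forth conditions follow the classical template. The atomic clause is immediate, since $Z$-related points agree on all formulae and in particular on atoms. For the $\Diamond$ (zig) clause, suppose $(\mathcal{N}_1,g_1)\,Z\,(\mathcal{N}_2,g_2)$ and $g_1\,\Delta_{\mathcal{I}_1}\,g_1'$, and suppose toward a contradiction that no $\Delta_{\mathcal{I}_2}$-successor of $g_2$ is $Z$-related to $(\mathcal{N}_1,g_1')$. Enumerating the finitely many successors $h_1,\dots,h_k$, for each $h_j$ pick a formula $\psi_j$ true at $(\mathcal{N}_1,g_1')$ but false at $(\mathcal{N}_2,h_j)$; then $\psi=\bigwedge_{j}\psi_j$ gives $(\mathcal{N}_1,g_1)\models\Diamond\psi$ while $(\mathcal{N}_2,g_2)\not\models\Diamond\psi$, contradicting $Z$. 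The zag clause is symmetric, and the $\langle\theta\rangle$ clauses are handled identically, except that the moves are now the finitely many pairs $(\theta,g')$ and the distinguishing formulae are evaluated in the intervened models $\mathcal{N}_\theta$ --- which is exactly why $Z$ was set up to range over intervened models.

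I expect the main obstacle to be the $\langle\theta\rangle$ clause rather than the ordinary modal clauses. Two points need care: (i) confirming that the finite conjunction of distinguishing formulae stays inside $\mathcal{L}(\langle\theta\rangle)$ and is evaluated at the correct intervened pointed models, so that the contradiction is genuinely about $\langle\theta\rangle\psi$; and (ii) checking that the appendix definition of bisimulation under intervention carries the model as part of the related state and imposes precisely the back-and-forth conditions my argument supplies. It is also worth flagging why the theorem restricts to the $\ast$-free fragment: the semantics of $\ast$ quantifies existentially over interface decompositions $\mathcal{C}_1,\mathcal{C}_2$ and evaluates its conjuncts in \emph{partial} models, so a single distinguishing formula could not be assembled by the image-finite counting argument above; excluding $\ast$ is precisely what keeps the Hennessy--Milner construction available.
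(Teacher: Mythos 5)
Your proposal follows essentially the same route as the paper's proof: treat the one-directional hypothesis as full logical equivalence via closure under negation, take logical equivalence as the candidate bisimulation under intervention, use finiteness of components and behaviours to obtain image-finiteness of both the transition moves and the (finitely many) intervention moves, and establish each back-and-forth clause by the standard Hennessy--Milner contradiction with a finite conjunction of distinguishing formulae. One detail where you improve on the paper: you correctly place the conjunction \emph{inside} the modality (i.e.\ $\langle\theta\rangle\bigwedge_i\psi^i$, resp.\ $\Diamond\bigwedge_j\psi_j$), whereas the paper's appendix writes $\bigwedge_i\langle\theta\rangle\psi^i$, which does not in fact yield the contradiction because different interventions could witness different conjuncts.
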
 
\begin{proof}
    Refer to the appendix (\ref{sec:app}). \fillBox
\end{proof}

\section{Discussion}
\label{sec:discussion}
One key distinguishing feature of our modelling approach, compared to traditional Structural Causal Models (SCMs), is the use of interventions as explicit mechanisms to enact \emph{rule changes}, rather than merely altering variable assignments or fixed structural equations. Standard SCM approaches typically assume stable causal mechanisms represented by structural equations that remain constant throughout analysis. In contrast, our intervention modality directly modifies influence rules governing component behaviour, offering greater flexibility in modelling scenarios involving dynamic system evolution, adaptation, or deliberate operational changes.

It provides a means for decomposing and modularly reasoning about system configurations and their causal interactions, making it particularly advantageous for forensic and audit scenarios as exemplified by the microservice-based architectures example.
Beyond microservice deployments, the combination of rule-based influence modelling and substructural logic applies naturally to several high-stakes domains that demand post-incident accountability. In industrial control systems, programmable-logic controllers, sensors, and safety interlocks already expose explicit control rules; atomic overrides map conceptually to our intervention operator, while separating conjunction models isolated mixing subsystems that interact only through shared pressure signals.

In large-scale payment and trading platforms gateways, atomic update to the codebase can be treated as interventions (in some suitable localised context) letting auditors trace a settlement outage to the precise validation rule that triggered it.
Similarly in the context of blockchain-based Decentralised Finance Lending, root cause of flash-crashes (a sudden drop in the price of the underlying asset) can be traced back.

Finally, smart-grid demand-response systems feature distributed energy resources coordinated by tariff rules and load-shedding commands that also conceptually map to interventions. A common feature across these settings is that components exhibit explicit behavioural boundaries, interventions are applied as discrete, auditable actions, and attributing causal accountability is required.

However, these strengths come with certain limitations. For instance, the abstraction level chosen deliberately omits detailed timing or continuous-time dynamics, potentially restricting the granularity of analyses in cyber-physical contexts. Furthermore, empirical validation through direct mappings from real-world events or operational logs to formal configurations remains a challenge.

Some directions for future work may include, an integration with explicit probabilistic reasoning or uncertainty quantification to enhance applicability to real-world scenarios, and extension of the framework toward game-theoretic analyses, explicitly incorporating strategic decision-making and equilibrium concepts. Such extensions would significantly broaden the practical utility and theoretical robustness of our framework.

Furthermore the present work deliberately leaves the question of substructurality, how resource constraints, local perspectives and non-duplicable assumptions shape causal relations, for future study. Our longer-term goal is to refine the counterfactual semantics so that an intervention is admissible only when permitted by a subsystem's resource frame. Such a substructural extension will align naturally with our separating conjunction (capturing locality of influence) and will let us reason about responsibility and control in settings where data, authority or physical access cannot be copied, discarded or globally modified.

Finally, we remark that it would evidently be both interesting and challenging to explore the ideas presented here in the context of learning-enabled AI systems (and by extension cyber-physical systems), where questions of causality, correctness, and security are of increasing  prominence: this would be a substantial programme of research in exploring a substructural approach to causal-strategic modelling. 

\medskip 

\noindent{\bf Acknowledgements} 
Chakraborty is supported by UKRI through the Centre for Doctoral Training in Cybersecurity at UCL. Caulfield and Pym acknowledge the partial support of UKRI Research 
Grants EP/R006865/1 and EP/S013008/1. 

\nocite{*}
\bibliographystyle{splncs04}
\bibliography{paper}

\section{Appendix}
\label{sec:app}

First, the proofs of the results required to establish the operational--logical equivalence --- van~Benthem-Bergstra-Hennessy-Milner --- properties are given. Then, the results are extended to causal notions over systems (such as actual causality) and the corresponding metatheoretical framework. Halpern's three criteria for characterizing actual causal relationship are also mentioned. \par

\subsection{Equivalence}
We first present the full definition of our model-changing notion of bisimulation. Recall that a pointed system model is a pair $(\mathcal{M}, f)$, where $\mathcal{M}$ is a system model and $f$ is a configuration in $\mathcal{M}$. Moreover as mentioned in remark \ref{remark:interveneRelation}, $R_\Theta$ denotes a relation on the class of pointed system models that relates two such models when one is derived from the other through the application of an intervention operation $\theta$.

\begin{definition}
Two interface-admitting pointed system models, 
\[
\mbox{$(\mathcal{M}_1,f_1)$ $=$ $(F_1,\Delta_{\mathcal{I}_1}, \Gamma, f_1)$ and $(\mathcal{M}_2,f_2)$ $=$ $(F_2, \Delta_{\mathcal{I}_2},\Gamma_2,f_2)$,} 
\]
are bisimilar under intervention if the following conditions are satisfied: 
\begin{enumerate}
    \item \textbf{(Atom):} For any atomic proposition $p$, 
    \[
    (\mathcal{M}_1, f_1) \models p \mbox{if and only if} (\mathcal{M}_2, f_2) \models p
    \]
    \item \textbf{(Zig):} If $f_1 \Delta_{\mathcal{I}_1} f_1'$, then there exists $f_2' \in F_2$ such that $f_2 \Delta_{\mathcal{I}_2} f_2'$ and  $(\mathcal{M}_1, f_1')$ and $(\mathcal{M}_2,f_2')$ are bisimilar.
    \item \textbf{(Zag):} If $f_2 \Delta_{\mathcal{I}_2} f_2'$, then there exists $f_1' \in F_1$ such that $f_1 \Delta_{\mathcal{I}_1} f_1'$ and $(\mathcal{M}_2, f_2')$ and $(\mathcal{M}_1,f_1')$ are bisimilar.
    \item \textbf{(Zig$_\Theta$):} If $(\mathcal{M}_1, f_1) R_\Theta (\mathcal{M}_1', f_1)$ then there exists $(\mathcal{M}_2', f_2)$ such that 
    \[
    (\mathcal{M}_2, f_2) R_\Theta (\mathcal{M}_2', f_2) \;\mbox{and}\; (\mathcal{M}_1', f_1) \;\mbox{and}\; (\mathcal{M}_2', f_2)
    \]
    are bisimilar under intervention.
    \item \textbf{(Zag$_\Theta$):} If $(\mathcal{M}_2, f_2) R_\Theta (\mathcal{M}_2', f_2)$, then there exists $(\mathcal{M}_1', f_1)$ such that
    \[
    (\mathcal{M}_1, f_1) R_\Theta (\mathcal{M}_1', f_1) \;\mbox{and}\; (\mathcal{M}_2', f_2) \;\mbox{and}\; \mathcal{M}_1', f_1)
    \]
    are bisimilar under intervention.
\end{enumerate}
\vspace{-8mm}
\fillBox
\end{definition}

The two van~Benthem-Bergstra-Hennessy-Milner completeness theorems mentioned in \ref{sec:metatheory} are proved below.
We provide a proof sketch omitting tedious details.
\begin{theorem}
If two interface-admitting pointed system models, $(\mathcal{M}_1,f_1)$ and $(\mathcal{M}_2,f_2)$ are bisimilar under intervention, then for all formulae $\varphi \in L(\langle \theta \rangle,\ast)$, if $(\mathcal{M}_1,f_1) \models \varphi$, then $(\mathcal{M}_2,f_2) \models \varphi$.
\end{theorem}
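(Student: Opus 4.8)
The plan is to proceed by structural induction on the formula $\varphi \in \mathcal{L}(\langle\theta\rangle, \ast)$, showing that bisimilarity under intervention preserves satisfaction at each connective. I would first state the induction hypothesis precisely: for every pair of pointed models bisimilar under intervention, and for every strictly smaller subformula, satisfaction transfers from the first model to the second. The base case, $\varphi = p$ atomic, is handled directly by the (Atom) clause of the bisimulation definition. The Boolean cases are routine: negation and conjunction follow immediately from the induction hypothesis, using the symmetry of the bisimulation relation (which is needed for $\neg$, so I would first note that bisimilarity under intervention is symmetric, or strengthen the statement to a biconditional and prove both directions simultaneously).

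For the modal cases $\Box\varphi$ and $\Diamond\varphi$, I would invoke the (Zig) and (Zag) conditions on the transition relation $\Delta_{\mathcal{I}}$. For $\Diamond\varphi$: if $(\mathcal{M}_1,f_1)\models\Diamond\varphi$, there is a $\Delta_{\mathcal{I}_1}$-successor $f_1'$ satisfying $\varphi$; (Zig) supplies a matching successor $f_2'$ with $(\mathcal{M}_1,f_1')$ and $(\mathcal{M}_2,f_2')$ bisimilar, and the induction hypothesis transfers $\varphi$ to $f_2'$. The $\Box$ case is dual and uses (Zag). For the intervention modality $\langle\theta\rangle\varphi$, the reasoning is analogous but one level up in the structure: I would unfold the semantics (Definition~\ref{def:satisfaction}), observing that $(\mathcal{M}_1,f_1)\models\langle\theta\rangle\varphi$ gives an intervention producing $(\mathcal{M}_1',f_1)$ via $R_\Theta$ together with a $\Delta^\theta_{\mathcal{I}_1}$-successor; here I use (Zig$_\Theta$) to obtain a correspondingly intervened $(\mathcal{M}_2',f_2)$ that is again bisimilar under intervention, then apply (Zig) inside the intervened models and the induction hypothesis.

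The main obstacle is the separating conjunction case $\varphi\ast\psi$. Here the semantics quantifies existentially over interface-admitting decompositions $\mathcal{C}=\mathcal{C}_1\cup\mathcal{C}_2$ of the \emph{component set}, and asserts satisfaction of $\varphi$ and $\psi$ in the two partial models $\mathcal{M}_{\mathcal{C}_1},\mathcal{M}_{\mathcal{C}_2}$ at the restricted configurations. The bisimulation conditions as stated govern transitions and interventions but do not obviously descend to partial submodels over a shared interface. The crux is therefore to show that bisimilarity under intervention between $(\mathcal{M}_1,f_1)$ and $(\mathcal{M}_2,f_2)$ induces, for any interface decomposition witnessing $\varphi\ast\psi$ in $\mathcal{M}_1$, a corresponding interface decomposition in $\mathcal{M}_2$ together with bisimilarities between the matching partial models $(\mathcal{M}_{1,\mathcal{C}_i}, f_1|_{\mathcal{C}_i})$ and $(\mathcal{M}_{2,\mathcal{C}_i}, f_2|_{\mathcal{C}_i})$.

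To discharge this, I would exploit that both models are \emph{interface-admitting} (an explicit hypothesis) and that the interface locality conditions of Definition~\ref{def:interfaceadmitting} constrain how influence contexts cross the partition; I expect the restriction of a bisimulation under intervention to a subsystem delimited by an interface to again be such a bisimulation, precisely because non-interface components depend only on components within their own partition, so local transitions and local interventions do not leak across the interface. I would isolate this as a restriction lemma --- \emph{if $(\mathcal{M}_1,f_1)$ and $(\mathcal{M}_2,f_2)$ are bisimilar under intervention and $I=\mathcal{C}_1\cap\mathcal{C}_2$ is an interface, then the partial pointed models over each $\mathcal{C}_i$ are bisimilar under intervention} --- prove it separately (this is the genuinely new content beyond the standard van~Benthem--Hennessy--Milner argument), and then invoke it plus the induction hypothesis to transfer $\varphi$ and $\psi$ to the corresponding subsystems of $\mathcal{M}_2$, yielding $(\mathcal{M}_2,f_2)\models\varphi\ast\psi$. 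As the excerpt advertises a proof sketch omitting tedious details, I would present the Boolean and modal cases briefly and devote the main effort to stating and justifying this restriction lemma.
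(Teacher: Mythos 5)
Your proposal follows essentially the same route as the paper: structural induction on $\varphi$, with the (Atom) clause for the base case, (Zig)/(Zag) for $\Diamond$ and $\Box$, and (Zig$_\Theta$)/(Zag$_\Theta$) for $\langle\theta\rangle$. The differences are all points where you are \emph{more} careful than the paper's own sketch, and they are worth recording. First, the paper dismisses the negation case as ``immediate,'' whereas you correctly observe that a one-directional induction hypothesis does not survive $\neg$; one must either prove the biconditional throughout or invoke symmetry of the bisimulation --- the paper's proof tacitly does the former without saying so. Second, for $\langle\theta\rangle\psi$ the paper evaluates $\psi$ at $f_1$ in the intervened model, silently dropping the $\Delta^{\theta}_{\mathcal{I}}$-successor that Definition~\ref{def:satisfaction} requires; your two-step treatment (Zig$_\Theta$ to match the intervention, then Zig inside the intervened models) is the one that actually matches the stated semantics. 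Third, and most importantly, for $\psi_1 \ast \psi_2$ the paper (i) restricts attention to \emph{star-free} $\psi_1,\psi_2$, and (ii) simply asserts ``by the premiss of this theorem and since the models admit of interface'' that the given decomposition of $(\mathcal{M}_1,f_1)$ transfers to a conjugate decomposition of $(\mathcal{M}_2,f_2)$ with bisimilar partial models. That assertion is exactly your restriction lemma; the paper never proves it. Your plan of isolating it, proving it from the locality conditions of Definition~\ref{def:interfaceadmitting}, and then closing the case by the induction hypothesis is both more honest and strictly more general (it handles nested $\ast$, which the paper's star-free restriction does not). One caution: the restriction lemma is the genuinely load-bearing claim, and it is not obviously true as stated --- a bisimulation between full models need not project to one between partial models unless you also argue that transitions and interventions of the partial models lift to and project from the full models across the interface (and that both models are over the same component set, which the paper implicitly assumes). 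Be prepared for this lemma to need those hypotheses made explicit; filling that in would repair a gap in the paper's proof, not just reproduce it.
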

\begin{proof} 
Let $\varphi \in \mathcal{L}(\langle\theta\rangle,\ast)$.
The proof is by induction on the syntax of $\varphi$.
First suppose that $\varphi$ contains no connectives.
The atom clause in definition of bisimulation under intervention covers the atomic propositions. 
Our induction hypothesis is that the implication holds for all formulae containing at most $n (n \geq 0)$ boolean connectives and modal operators.
We must now show that the implication holds for all formulae $\varphi$ containing $n + 1$ connectives and operators.
Consider the case when $\varphi$ contains no modal operators.
If $\varphi$ is of the form $\neg \psi$ then by the induction hypothesis the implication is immediate.
If $\varphi = \psi_1 \land \psi_2$, then, by the induction hypothesis, both $(\mathcal{M}_2, f_2) \models \psi_1$ and $(\mathcal{M}_2,f_2) \models \psi_2$, and thereby $(\mathcal{M}_2, f_2) \models \varphi$.

Consider the case when $\varphi = \Diamond \psi$.
Assume that $(\mathcal{M}_1, f_1) \models \Diamond \psi$.
By the semantics of $\Diamond$, there exists $f_1\Delta_{\mathcal{I}_1} f_1'$ such that $(\mathcal{M}_1,f_1') \models \psi$.
By clause $\text{\textbf{Zig}}_\Diamond$ in the definition of bisimulation under intervention, it follows that there exists $f_2'$ such that $f_2\Delta_{\mathcal{I}_2}f_2'$, and $(\mathcal{M}_1,f_1')$ and $(\mathcal{M}_2,f_2')$ are bisimilar under intervention.
By the induction hypothesis, we conclude that $(\mathcal{M}_2,f_2') \models \psi$, and consequently $(\mathcal{M}_2,f_2) \models \varphi$.
Similarly, from $(\mathcal{M}_2,f_2) \models \Diamond \psi$ we conclude $(\mathcal{M}_1,f_1) \models \Diamond \psi$ by the $\text{\textbf{Zag}}_\Diamond$ clause.
The case when $\varphi = \Box \psi$ is similar. 

Now, consider $\varphi = \psi_1 \ast \psi_2$ where $\psi_1$ and $\psi_2$ are star-free formulae.
Assume $(\mathcal{M}_1,f_1) \models \varphi$.
Then there exists a pair of pointed partial models $(\mathcal{M}_1^a,f_1^a)$ and, $(\mathcal{M}_2^b,f_1^b)$ s.t. $(\mathcal{M}_1^a,f_1^a) \models \psi_1$ and $(\mathcal{M}_1^b,f_1^b) \models \psi_2$.
By the premiss of this theorem and since the models admit of interface, we have that there exist two bisimulations under intervention such that the pairs $(\mathcal{M}_1^a,f_1^a)$, $(\mathcal{M}_2^a,f_2^a)$ and $(\mathcal{M}_1^b,f_1^b)$, $(\mathcal{M}_2^b,f_2^b)$ each are bisimilar, and the pair $\{(\mathcal{M}_2^a,f_2^a)$ , $(\mathcal{M}_2^b,f_2^b)\}$ is a conjugate decomposition of $(\mathcal{M}_2,f_2)$.
Since $\psi_1$ and $\psi_2$ are star-free, it follows that $(\mathcal{M}_2^a,f_2^a) \models \psi_1$ and $(\mathcal{M}_2^b,f_2^b) \models \psi_2$, and consequently $(\mathcal{M}_2,f_2) \models \varphi$.

Consider the case when $\varphi = \langle \theta \rangle\psi$.
Assume $(\mathcal{M}_1,f_1) \models \varphi$.
By the semantics of intervention operator there exists $(\mathcal{M}_1',f_1)$ such that $(\mathcal{M}_1',f_1) \models \psi$, and $(\mathcal{M}_1,f_1) R_\theta (\mathcal{M}_1',f_1)$ for some intervention $\theta$.
From the $\text{\textbf{Zig}}_\Theta$ clause, it follows that there exists $\mathcal{M}_2'$ such that $(\mathcal{M}_2,f_2) R_\theta (\mathcal{M}_2',f_2)$, and $(\mathcal{M}_1',f_1)$ and $(\mathcal{M}_2',f_2)$ are bisimilar under intervention.
By the induction hypothesis, we conclude that $(\mathcal{M}_2',f_2) \models \psi$, and thence $(\mathcal{M}_2,f_2) \models \varphi$. The converse follows similarly via $\text{\textbf{Zag}}_\Theta$ clause. \fillBox
\end{proof}

The other theorem establishes the converse under specific restrictions on the language and for the subclass of interface-admitting system models with finitely many components and behaviours. 
\begin{theorem}
Given a formula $\varphi \in \mathcal{L}(\langle\theta\rangle)$, for any two interface-admitting pointed system models $(\mathcal{M}_1,f_1)$ and $(\mathcal{M}_2,f_2)$ with finitely many components and behaviours, if $(\mathcal{M}_1,f_1) \models \varphi$ implies $(\mathcal{M}_2,f_1) \models \varphi$, then there exists a bisimulation under intervention relating $(\mathcal{M}_1,f_1)$ and $(\mathcal{M}_2,f_2)$.
\end{theorem}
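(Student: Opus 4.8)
The plan is to run the classical Hennessy--Milner argument in the model-changing setting, using the finiteness hypothesis to secure the image-finiteness that both the transition modalities and the intervention modality require. Since we work in the $\ast$-free fragment $\mathcal{L}(\langle\theta\rangle)$, no conjugate-decomposition bookkeeping is needed, so the relation we construct need only satisfy the five clauses \textbf{Atom}, \textbf{Zig}, \textbf{Zag}, $\text{\textbf{Zig}}_\Theta$, $\text{\textbf{Zag}}_\Theta$. I would first note that, because the language is closed under negation, the stated one-directional preservation ``$(\mathcal{M}_1,f_1)\models\varphi$ implies $(\mathcal{M}_2,f_2)\models\varphi$ for all $\varphi$'' already yields full \emph{modal equivalence}: applying it to $\neg\psi$ recovers the reverse implication.

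I would then define the candidate relation $Z$ on pointed system models by $(\mathcal{N}_1,g_1)\,Z\,(\mathcal{N}_2,g_2)$ iff the two satisfy exactly the same formulae of $\mathcal{L}(\langle\theta\rangle)$. By the previous remark $(\mathcal{M}_1,f_1)\,Z\,(\mathcal{M}_2,f_2)$, so it suffices to show that $Z$ is a bisimulation under intervention. The \textbf{Atom} clause is immediate, since each atomic proposition is itself a formula. For \textbf{Zig}, suppose $f_1\,\Delta_{\mathcal{I}_1}\,f_1'$. As $\mathcal{M}_2$ has finitely many components and behaviours, $F_2$ is finite and $f_2$ has only finitely many successors $g^{(1)},\dots,g^{(k)}$. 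If none were modally equivalent to $f_1'$, choose for each $j$ a formula $\psi_j$ with $(\mathcal{M}_1,f_1')\models\psi_j$ but $(\mathcal{M}_2,g^{(j)})\not\models\psi_j$; then $\psi=\bigwedge_{j=1}^{k}\psi_j$ gives $(\mathcal{M}_1,f_1)\models\Diamond\psi$ yet $(\mathcal{M}_2,f_2)\not\models\Diamond\psi$, contradicting $Z$. Hence some $g^{(j)}$ satisfies $(\mathcal{M}_1,f_1')\,Z\,(\mathcal{M}_2,g^{(j)})$, and \textbf{Zag} is symmetric.

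The substance lies in the intervention clauses. The key observation is that finiteness also bounds the interventions: each target $C'$ ranges over the finite powerset of $\mathcal{C}$, and each replacement rule $\mathcal{I}'_c$ is one of finitely many functions from the finite domain $\mathbb{B}(c)\times\prod_{d\in\mathsf{Inf}(c)}\mathbb{B}(d)$ to the finite set $\mathbb{B}(c)$. Thus only finitely many interventions exist, so the $R_\Theta$-image of every pointed model is finite. For $\text{\textbf{Zig}}_\Theta$, suppose $(\mathcal{M}_1,f_1)\,R_\Theta\,(\mathcal{M}_1',f_1)$; enumerate the finitely many $R_\Theta$-successors $(\mathcal{M}_2^{(1)},f_2),\dots,(\mathcal{M}_2^{(m)},f_2)$ of $(\mathcal{M}_2,f_2)$ and repeat the distinguishing-formula argument with $\langle\theta\rangle$ in place of $\Diamond$: were no successor modally equivalent to $(\mathcal{M}_1',f_1)$, the conjunction of the separating formulae would give $(\mathcal{M}_1,f_1)\models\langle\theta\rangle\psi$ while $(\mathcal{M}_2,f_2)\not\models\langle\theta\rangle\psi$, a contradiction; $\text{\textbf{Zag}}_\Theta$ is symmetric.

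I expect the intervention clauses to be the principal obstacle, for two reasons. First, the image-finiteness of $R_\Theta$ hinges on the finite-rule count above, and one must verify that the semantics of $\langle\theta\rangle$ quantifies over precisely these interventions so that a single formula $\langle\theta\rangle\psi$ genuinely separates the two models. Second, one must confirm that modal equivalence is the correct invariant to carry across a model-changing step: the post-intervention models $(\mathcal{M}_1',f_1)$ and $(\mathcal{M}_2^{(j)},f_2)$ are distinct system models, sharing the configuration space and valuation but differing in their transition relations, so the inductive application of $Z$ must be checked to be well-typed. Beyond these points, the construction is the familiar Hennessy--Milner bookkeeping made available by finiteness.
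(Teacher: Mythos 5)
Your proposal is correct and takes essentially the same route as the paper's own proof: both use modal equivalence as the candidate relation, exploit finiteness of components and behaviours (hence of configurations and of interventions) to make both $\Delta_{\mathcal{I}}$ and $R_\Theta$ image-finite, and close each clause by a Hennessy--Milner distinguishing-formula contradiction. If anything, your placement of the conjunction of separating formulae inside the modality, i.e.\ $\langle\theta\rangle\bigwedge_{j}\psi_j$, is the more careful rendering: the paper writes $\bigwedge_{i}\langle\theta\rangle\psi^i$, which does not by itself contradict the assumption, since different $R_\Theta$-successors could witness different conjuncts.
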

\begin{proof}
    Let $(\mathcal{M}_1,f_1) =(F_1,\Delta_{\mathcal{I}_1}, \Gamma, f_1)$ and
    $(\mathcal{M}_2,f_2) = (F_2, \Delta_{\mathcal{I}_2},\Gamma_2,f_2)$ be the two pointed models over a set of components $\mathcal{C}$.
    Since $|\mathcal{C}|$ is finite, there are only finitely many $f_1'$ and $f_2'$ such that $f_1\Delta_{\mathcal{I}_1}f_1'$, and $f_2\Delta_{\mathcal{I}_2}f_2'$.
    Similarly, there are only finitely many interventions $\theta$ possible over the subsets of $\mathcal{C}$. Therefore given a pointed model $(\mathcal{M},f)$, there will be only finitely many $(\mathcal{M}',f)$ such that $(\mathcal{M},f) R_\Theta (\mathcal{M}',f)$ for some intervention $\theta$.
    
    For all $\varphi \in \mathcal{L}(\langle\theta\rangle)$, $(\mathcal{M}_1, f_1) \models \varphi$ iff $(\mathcal{M}_2, f_2) \models \varphi$ (by assumption), and in particular, for all atomic propositions $p$, $(\mathcal{M}_1, f_1) \models p$ iff $(\mathcal{M}_2, f_2) \models p$.

    We show the following contradiction: 
    Let $(\mathcal{M}_1,f_1) R_\Theta (\mathcal{M}_1',f_1)$. Assume that there is no $\mathcal{M}_2'$ such that for all $\varphi$, $(\mathcal{M}_1', f_1) \models \varphi$ iff $(\mathcal{M}_2', f_2) \models \varphi$.
    Let $S = \{\mathcal{N}_2 | (\mathcal{M}_2,f_2)R_\Theta(\mathcal{N}_2,f_2)\}$.
    $S$ is neither non-empty nor infinite (since there are only finitely many interventions possible).
    Thus $S = \{ \mathcal{N}_2^1, \cdots , \mathcal{N}_2^n \}$. By assumption, for every $\mathcal{N}_2^i \in S$, there exists a formula $\psi^i$ such that $(\mathcal{M}_1', f_1) \models \psi^i$ and $(\mathcal{N}_2^i, f_2) \not\models \psi^i$.
    But then there is a formula $\psi^i$ for each $\mathcal{N}_2^i$ such that $(\mathcal{M}_1,f_1) \models \bigwedge_{i \in \{1,\cdots,n\} } \langle \theta \rangle \psi^i$, but $(\mathcal{M}_2,f_2) \not\models \bigwedge_{i \in \{1,\cdots,n\} } \langle \theta \rangle \psi^i$ which is a contradiction.

    Similarly, the converse clause can be shown.
    Moreover, two similar contradictions with regards to the corresponding $\Delta$ relations, establish the standard zig and zag clauses (which follows from the fact there are only finitely many $f_1'$ and $f_2'$ such that $f_1\Delta_{\mathcal{I}_1}f_1'$ and $f_2\Delta_{\mathcal{I}_2}f_2'$).
    Thus if for all $\varphi \in \mathcal{L}(\langle\theta\rangle)$, $(\mathcal{M}_1, f_1) \models \varphi$ iff $(\mathcal{M}_2, f_2) \models \varphi$, then $(\mathcal{M}_1, f_1)$ and $(\mathcal{M}_2, f_2)$ are bisimilar under intervention. \fillBox
\end{proof}

\subsection{Actual cause}
In the sequel, we provide the definition of actual cause as found in Chapter 2 of Halpern's Actual Causality \cite{HalpernActualCausality}.
As mentioned previously, we used the \textbf{AC2($a^m$)} clause.
\begin{definition}
Given a causal setting $(M, \vec{u})$, $\vec{X} = \vec{x}$ is an \emph{actual cause} of $\varphi$ if the following three conditions hold:
\begin{itemize}
    \item[] \textbf{AC1.} $(M, \vec{u}) \models (\vec{X} = \vec{x})$ and $(M, \vec{u}) \models \varphi$.
     \item[] \textbf{AC2($a^m$).} There is a set $\vec{W}$ of variables in $V$ and a setting $\vec{x}$ of the variables in $\vec{X}$ such that if $(M, \vec{u}) \models \vec{W} = \vec{w}^*$, then $(M,\vec{u}) \models [\vec{X} \leftarrow \vec{x}, \vec{W} \leftarrow \vec{w}^*]\neg \varphi$ 
     \item[] \textbf{AC3.} $\vec{X}$ is minimal; there is no strict subset $\vec{X}'$ of $\vec{X}$ such that $\vec{X}' = \vec{x}'$ satisfies conditions AC1 and AC2, where $\vec{x}'$ is the restriction of $\vec{x}$ to the variables in $\vec{X}'$.
\end{itemize}
\vspace{-7mm}
\fillBox
\end{definition}

We now give the proof of Lemma~\ref{lem:characterizationIntervention} which relates interventions in Causal System Models:

\begin{lemma}[Characterization of Interventions in Causal System Models]
Let $\mathcal{M} = (F, \Delta_{\mathcal{I}}, \Gamma)$ be a causal system model with causal projection $\mathcal{M}^c = (F^c, \Delta_{\mathcal{I}}^c, \Gamma^c)$.
Let $\theta = (C', I'_{C'})$ be an intervention yielding the intervened system $\mathcal{M}_\theta = (F, \Delta_{\mathcal{I}_\theta}, \Gamma)$. The following holds:

\begin{enumerate}
    \item \textbf{Causal Preservation Criterion:} %\\
    If a causal chain $(f_1, \dots, f_n) \!\in\! \text{Chain}(\mathcal{M}) $ does not involve any component in $C'$ as part of the cause for any transition, then this chain is preserved under intervention $\theta$. Formally, we have that $\forall i  (1 \leq i < n), \quad C' \cap \text{Cause}(f_i, f_{i+1}) = \emptyset \quad \Rightarrow \quad (f_1, \dots, f_n) \in \text{Chain}(\mathcal{M}_\theta)$. 
    
    \item \textbf{Causal Disruption Criterion:} %\\
    If a causal chain $ (f_1, \dots, f_n) \in \text{Chain}(\mathcal{M}) $ contains a configuration $f_i$ whose cause involves components in $C'$, and the intervention $\theta$ modifies the influence rules such that the causal transition to $f_{i+1}$ is invalidated, then the chain is disrupted. Formally, we have that $\exists i  (1 \leq i < n) \text{ such that } C' \cap \text{Cause}(f_i, f_{i+1}) \neq \emptyset \text{ and } \langle \theta \rangle \neg (f_i \Delta_{\mathcal{I}_\theta} f_{i+1})$.
\end{enumerate}
\end{lemma}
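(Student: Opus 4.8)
The plan is to prove both criteria by checking the defining conditions of a causal chain (Definition~\ref{def:causalchain}) directly against the intervened transition relation $\Delta_{\mathcal{I}_\theta}$, exploiting the single structural fact supplied by Definition~\ref{def:intervention}: the modified family $\mathcal{I}_\theta$ agrees with $\mathcal{I}$ on every component outside $C'$, since $\mathcal{I}^\theta_c = \mathcal{I}_c$ whenever $c \notin C'$. Combined with Definition~\ref{def:transitionrelation}, in which each one-step transition updates exactly one component via its own rule, this gives the base observation that $\Delta_{\mathcal{I}_\theta}$ and $\Delta_{\mathcal{I}}$ coincide on every one-step transition whose updated component lies outside $C'$; only transitions driven by a $C'$-component can differ.

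First, for the \textbf{Preservation Criterion}, I would fix a chain $(f_1,\dots,f_n) \in \text{Chain}(\mathcal{M})$ with $C' \cap \text{Cause}(f_i,f_{i+1}) = \emptyset$ for all $i$, and argue that each witnessing path realizing $f_i \Delta^+_{\mathcal{I}} f_{i+1}$ may be chosen to update only components outside $C'$: any $C'$-component genuinely required to drive the progression toward $f_{i+1}$ would be forced into the minimal cause by the minimality clause of Definition~\ref{def:cause}, contradicting disjointness. By the structural fact above, the same path then witnesses $f_i \Delta^+_{\mathcal{I}_\theta} f_{i+1}$, establishing condition~2 of Definition~\ref{def:causalchain} in $\mathcal{M}_\theta$. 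I would then transfer the three cause conditions of Definition~\ref{def:cause}: actuality holds because the transition sequence persists; the counterfactual clause holds because the witness set $\mathcal{W}$ and the rules governing the cause components are untouched by $\theta$, so the implication $\langle\theta'\rangle(\chi_{\mathcal{W}} \ast \chi'_{C\setminus\mathcal{W}}) \to \Box^+\neg(\psi_E \land \chi_C)$ is unaffected, where $\theta'$ is the counterfactual-witnessing intervention; and minimality is inherited since exactly the same components remain essential. Chain-minimality (condition~3 of Definition~\ref{def:causalchain}) carries over likewise, yielding $(f_1,\dots,f_n) \in \text{Chain}(\mathcal{M}_\theta)$.

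Next, for the \textbf{Disruption Criterion}, I would take the index $i$ with $C' \cap \text{Cause}(f_i,f_{i+1}) \neq \emptyset$ at which $\theta$ rewrites the relevant rule. By the semantics of the intervention operator (Definition~\ref{def:satisfaction}) together with the construction of $\mathcal{M}_\theta$, the revised rules no longer yield $f_{i+1}$ from $f_i$, so $(f_i,f_{i+1}) \notin \Delta_{\mathcal{I}_\theta}$ and the realizing path is broken, i.e. $(f_i,f_{i+1}) \notin \Delta^+_{\mathcal{I}_\theta}$. This falsifies condition~2 of Definition~\ref{def:causalchain}, whence the sequence is no longer a causal chain in $\mathcal{M}_\theta$; reading this under the diamond gives exactly $\langle\theta\rangle\neg(f_i \Delta_{\mathcal{I}_\theta} f_{i+1})$.

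The hard part will be the preservation direction's step linking the combinatorial notion of \emph{cause} --- defined through witness sets and counterfactual minimality --- to the operational structure of the realizing transition path. Concretely, I must show that $C'$-disjointness of the cause genuinely forbids $C'$-updates along \emph{some} path realizing $f_i \Delta^+_{\mathcal{I}} f_{i+1}$, which requires an auxiliary lemma to the effect that every component actively participating in producing the effect belongs to the minimal cause; this is precisely where the counterfactual and minimality clauses of Definition~\ref{def:cause} must be invoked with care. The disruption direction, by contrast, is essentially immediate once the intervention semantics are unfolded, since its hypothesis already builds in the invalidation of the transition.
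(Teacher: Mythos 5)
Your overall route is the same as the paper's: unfold Definition~\ref{def:intervention} and Definition~\ref{def:transitionrelation} to observe that $\Delta_{\mathcal{I}_\theta}$ and $\Delta_{\mathcal{I}}$ agree on one-step transitions updating components outside $C'$, argue preservation from that, and note that disruption is essentially immediate because its hypothesis already asserts the invalidation of the transition. Your base observation and your treatment of the disruption criterion are fine. The problem is the step you yourself flag as ``the hard part'': your proposed auxiliary lemma --- that every component actively participating in producing the effect belongs to the minimal cause --- is false in this framework, so the preservation argument cannot be completed along the line you sketch. Under Definition~\ref{def:cause}, a cause is a set of components whose behaviours are \emph{held fixed} (condition~1 requires $f_1(c)=f_2(c)$ for all $c\in\mathcal{C}'$); it is the static enabling context. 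The components that ``actively participate'' --- i.e., the ones whose influence rules are invoked along a path realizing $f_i\,\Delta^+_{\mathcal{I}}\,f_{i+1}$ --- are exactly the ones that \emph{change} behaviour, and these are typically excluded from every cause rather than contained in one. The paper's own microservice analysis in Section~\ref{subsec:causal-queries} illustrates this: the minimal cause of $f_2$ from $f_1$ is $\{\mathsf{UserDB}\}$, yet $\mathsf{Auth}$, $\mathsf{ProfileSvc}$, $\mathsf{Logger}$, and $\mathsf{FrontEnd}$ all update along the realizing path. Consequently, an intervention with $C'=\{\mathsf{FrontEnd}\}$ (say, forcing $\mathsf{FrontEnd}$ to remain $\mathit{idle}$) satisfies the preservation hypothesis $C'\cap\mathrm{Cause}(f_i,f_{i+1})=\emptyset$, yet makes $f_2$ unreachable and so breaks clause~2 of Definition~\ref{def:causalchain}. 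Cause-disjointness simply does not forbid $C'$-updates on the witnessing path, and no lemma of the form you propose can be proved.

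Two further remarks. First, your transfer of the counterfactual clause under preservation (``the implication is unaffected'') is also too quick: $\Delta_{\mathcal{I}_\theta}$ may contain \emph{new} transitions generated by the replaced rules, and the counterfactual condition of Definition~\ref{def:cause} quantifies over transitions of the model in which the cause is being certified, so it must be re-established in $\mathcal{M}_\theta$, not inherited from $\mathcal{M}$. Second, in fairness to you: the paper's own proof asserts exactly the step in question --- ``if $C'$ is disjoint from the cause set of every transition, the influence rules governing those transitions remain unchanged'' --- with no justification, so it glosses over the same difficulty you exposed. Your proposal is more candid about where the argument is fragile, but as written it does not close the gap, and the gap appears to be unclosable without strengthening the hypothesis of the preservation criterion (e.g., requiring $C'$ to be disjoint from all components that update along some realizing path, not merely from the cause sets).
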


\begin{proof}  
Let $ \mathcal{M} = (F, \Delta_{\mathcal{I}}, \Gamma) $ be a causal system with causal projection $ \mathcal{M}^c = (F^c, \Delta_{\mathcal{I}}^c, \Gamma^c) $, and let $ \theta = (C', I'_{C'}) $ be an intervention yielding the intervened system $ \mathcal{M}_\theta = (F, \Delta_{\mathcal{I}_\theta}, \Gamma)$.

Let $ (f_1, \dots, f_n) \in Chain(\mathcal{M})$  be an arbitrary causal chain in the original system.
Assume that none of the configurations in the chain $ (f_1, \dots, f_n) $ involve any component from $C'$ as part of their cause.
By the definition of intervention, if $C'$ is disjoint from the cause set of every transition in the chain, the influence rules governing those transitions remain unchanged.
Consequently, the transition relation $\Delta_{\mathcal{I}_\theta}$ remains identical to $\Delta_{\mathcal{I}}$ for these configurations. 
Therefore, the transitions $(f_i, f_{i+1})$ are preserved, meaning the entire chain $(f_1, \dots, f_n)$ persists in $\mathcal{M}_\theta$.
Thus, the causal chain is preserved under the intervention.

Now assume that the chain $ (f_1, \dots, f_n) $ involves a configuration $ f_i $ whose cause depends on components in $ C' $. Suppose the intervention $ \theta $ modifies the influence rules such that the cause of $ f_{i+1} $ is invalidated. 
By the definition of intervention, the updated influence rule $ I'_{C'} $ modifies how components in $ C'$ contribute to transitions. If the intervention disrupts the causal condition required for $ f_i $ to transition to $ f_{i+1} $, then the transition $ (f_i, f_{i+1}) $ no longer exists in $ \Delta_{\mathcal{I}_\theta} $.
Formally, this is captured by $ \langle \theta \rangle \neg (f_i \Delta_{\mathcal{I}_\theta} f_{i+1})$. 
Consequently, the entire causal chain $ (f_1, \dots, f_n) $ is disrupted, as the sequence of causally linked configurations is broken.
Thus, the causal disruption criterion holds. \fillBox
\end{proof}

We now prove Theorem \ref{thm:HPequivalence}.
As before, we omit the tedious details.
\begin{theorem}
Let $\mathcal{M} = (F, \Delta_{\mathcal{I}}, \Gamma)$ be a causal system model over a finite component set $\mathcal{C}$.
Construct a corresponding HP causal model $M = \langle U, V, F \rangle$, where $V = \{ V_c \mid c \in \mathcal{C} \}$ with $\operatorname{dom}(V_c)=\mathbb{B}(c)$ and the structural equations in $F$ are induced by the influence rules $\mathcal{I}$ of $\mathcal{M}$.
For any configuration $f_2 \in F$, define the effect formula $\varphi_{f_2} = \bigwedge_{c \in \mathcal{C}} (V_c = f_2(c))$.
Then, if there exists a causal chain in $\mathcal{M}$ from an initial configuration $f_1$ to $f_2$, one can extract a candidate cause, that is, a subset $\vec{X} \subseteq V$ and an assignment $\vec{x}$ (with a corresponding context $\vec{u}$) such that the assignment $\vec{X} = \vec{x}$ satisfies the HP criteria (AC1--AC3) for being an actual cause of $\varphi_{f_2}$ in $(M,\vec{u})$. In other words, the existence of a causal chain from $f_1$ to $f_2$ in $\mathcal{M}$ implies that there is a corresponding actual cause in the HP model.
\end{theorem}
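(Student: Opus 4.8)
The plan is to exploit the fact that the three clauses of Definition~\ref{def:cause} were engineered to mirror AC1, AC2($a^m$), and AC3, so the proof reduces to a faithful translation of the transition-based witnesses into the structural-equation language of $(M,\vec{u})$. First I would fix the construction of $M = \langle U, V, F\rangle$: set $V = \{V_c \mid c \in \mathcal{C}\}$ with $\operatorname{dom}(V_c) = \mathbb{B}(c)$, take the parent set of $V_c$ to be $\{V_d \mid d \in \mathsf{Inf}(c)\}$ so that the dependency graph of $F$ is exactly the influence graph, and let the exogenous context $\vec{u}$ encode the initial configuration $f_1$ together with whatever external trigger launches the run. Each structural equation $V_c = f_c(pa_c, u_c)$ is defined as the terminal-value map obtained by iterating the influence rule $\mathcal{I}_c$ along the chain $f_1 \Delta^+_{\mathcal{I}} f_2$; by construction its unique solution under $\vec{u}$ is $V_c = f_2(c)$, i.e.\ $(M,\vec{u}) \models \varphi_{f_2}$. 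The candidate cause is then $\vec{X} = \{V_c \mid c \in \mathcal{C}'\}$ with $\vec{x} = (f_1(c))_{c \in \mathcal{C}'}$, where $\mathcal{C}'$ is the cause set supplied by the causal chain.

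With the construction in place, AC1 and AC3 are essentially immediate. For AC1 I would invoke the actuality clause (condition~1 of Definition~\ref{def:cause}): the chain $f_1 \Delta^+_{\mathcal{I}} f_2$ witnesses $\Diamond^+(\psi_E \wedge \chi_C)$ and holds each cause component fixed, so in the solved model both $(M,\vec{u}) \models (\vec{X} = \vec{x})$ and $(M,\vec{u}) \models \varphi_{f_2}$. For AC3 I would transport the minimality clause (condition~3) essentially verbatim: any strict subset $\vec{X}' \subsetneq \vec{X}$ corresponds to a strict subset $\mathcal{C}'' \subsetneq \mathcal{C}'$, which by hypothesis already fails one of the first two conditions and hence cannot satisfy AC1--AC2.

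The substantive step is AC2($a^m$). Here I would take the HP witness set $\vec{W}$ to be $\{V_c \mid c \in \mathcal{W}\}$, where $\mathcal{W}$ is the witness set of condition~2, with $\vec{w}^*$ its actual values, and I would read the counterfactual formula $\langle\theta\rangle(\chi_{\mathcal{W}} \ast \chi'_{C\setminus\mathcal{W}}) \to \Box^+\neg(\psi_E \wedge \chi_C)$ as precisely the HP statement $[\vec{X} \leftarrow \vec{x}, \vec{W} \leftarrow \vec{w}^*]\neg\varphi_{f_2}$: the intervention $\theta$ that deviates the cause components while pinning the witnesses is the interventional setting $[\vec{X} \leftarrow \vec{x}, \vec{W} \leftarrow \vec{w}^*]$, and the conclusion that no run reaches $f_2$ is exactly $\neg\varphi_{f_2}$ in the re-solved model. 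The separating conjunction $\ast$ has no native HP counterpart, so I would discharge it by observing that, for an interface-admitting model, the split of the post-intervention configuration along $C_1 \cap C_2$ is a locality side-condition that holds automatically; at the level of variable assignments $\ast$ collapses to ordinary conjunction, which is all AC2($a^m$) demands.

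The main obstacle I anticipate is the dynamic-to-static reconciliation in the construction step: HP structural equations determine all endogenous values through a single simultaneous solve, whereas $\Delta_{\mathcal{I}}$ evolves one component at a time and the cause is stated via the transitive closure $\Delta^+_{\mathcal{I}}$. Making the terminal-value map $f_c$ well-defined therefore requires that the run from $f_1$ to $f_2$ converge to a unique terminal configuration; I would secure this by working inside the causal projection and invoking acyclicity of $(F^c, \Delta^c)$ (the no-causal-loop condition noted after Definition~\ref{def:CausalModel}), which supplies the uniqueness of solutions that HP models presuppose and ensures that an HP intervention $[\vec{X} \leftarrow \vec{x}]$ faithfully tracks the corresponding intervention $\theta$ in $\mathcal{M}$.
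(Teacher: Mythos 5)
Your proposal takes essentially the same route as the paper's own proof: the identical HP-model construction (one variable $V_c$ per component with $\operatorname{dom}(V_c)=\mathbb{B}(c)$, equations induced by the influence rules, context $\vec{u}$ chosen to realise $f_1$), the same extraction of the candidate cause $\vec{X}$ from the cause sets supplied by the causal chain, and the same clause-by-clause transport of the three conditions of Definition~\ref{def:cause} onto AC1--AC3. If anything you are more explicit than the paper about two points it glosses over --- the collapse of $\ast$ to ordinary conjunction when reading the AC2($a^m$) witness, and the dynamic-to-static reconciliation of $\Delta^+_{\mathcal{I}}$ with a one-shot structural solve --- though your claim that acyclicity of the causal projection alone guarantees a well-defined terminal-value map is a confluence assertion that would need its own argument; the paper's proof simply does not engage with this issue at all.
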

\begin{proof}
Let $\mathcal{M} = (F, \Delta_{\mathcal{I}}, \Gamma)$ be a causal system model over a finite component set $\mathcal{C}$ and assume there exists a causal chain 
$(f_1, f_2, \ldots, f_n)$ in $\mathcal{M}$ with $f_1$ as the initial configuration and $f_n = f_2$ as the outcome.
By definition of a causal chain, each transition $f_i \Delta_{\mathcal{I}} f_{i+1}$ is justified by the fact that some subset $\mathcal{C}_i\subseteq \mathcal{C}$ acts as a cause for the change from $f_i$ to $f_{i+1}$, while remaining unchanged, and such that altering that subset prevents the transition (counterfactual dependence), with minimality ensured by discarding any superfluous components.

We now construct a corresponding HP model 
$M = \langle U, V, F \rangle$, where the set of endogenous variables is $V = \{V_c \mid c\in\mathcal{C}\}$, with $\operatorname{dom}(V_c) = \mathbb{B}(c)$, and the structural equations in $F$ are induced directly by the influence rules $\mathcal{I}$ (i.e., for each $c$, the equation for $V_c$ is given by a function $f_c$ reflecting $\mathcal{I}_c$ and depending on the values of the variables corresponding to the influence context $\mathsf{Inf}(c)$).
Choose $U$ so that a fixed initial assignment $\vec{u}$ yields the starting configuration $f_1$.

By the premiss of this theorem, we define the effect formula 
$\varphi_{f_2} = \bigwedge_{c\in\mathcal{C}} (V_c = f_2(c))$.
Since the causal chain in $\mathcal{M}$ guarantees that $f_2$ is reached from $f_1$ via a series of transitions that satisfy the regularity, counterfactual, and minimality conditions, we can extract a candidate cause. In particular, there exists some subset $\vec{X} \subseteq V$, corresponding to the union of the causal subsets $\mathcal{C}_i$ from each transition (or a minimal such subset), and an assignment $\vec{x}$ such that:
\begin{enumerate}
    \item In the causal setting $(M,\vec{u})$, the assignment $\vec{X} = \vec{x}$ holds, and $M$ under $\vec{u}$ satisfies $\varphi_{f_2}$.
    \item If we intervene to alter the values of $\vec{X}$ (while keeping the values for a suitable witness set fixed), then the structural equations in $F$ imply that the effect $\varphi_{f_2}$ would not obtain (i.e., for every configuration reachable under the intervention, $\varphi_{f_2}$ fails). This follows directly from the counterfactual condition in the causal chain.
    \item By the minimality condition in the causal chain, no strict subset of $\vec{X}$ would suffice to guarantee the effect under the counterfactual analysis.
\end{enumerate}

Therefore, the candidate cause $(\vec{X}=\vec{x})$ extracted from the causal chain in $\mathcal{M}$ satisfies the HP conditions (AC1--AC3) for being an actual cause of $\varphi_{f_2}$ in the HP model $(M,\vec{u})$. This completes the proof that the existence of a causal chain from $f_1$ to $f_2$ in $\mathcal{M}$ implies the existence of a corresponding actual cause in $M$. \fillBox
\end{proof}
\end{document}